\newtheorem{theorem}{Theorem}[section]
\newtheorem{lemma}[theorem]{Lemma}
\newtheorem{corollary}[theorem]{Corollary}
\newtheorem{definition}[theorem]{Definition}
\newenvironment{alg}{
  \begin{algorithm}[htb]
    \SetKwInput{In}{input}
    \SetKwInput{Out}{output}
  }{\end{algorithm}}
\newcommand{\abbrev}[2]{\expandafter\newcommand\csname #1\endcsname{#2\xspace}}
\newenvironment{prf}{\begin{proof}}{\end{proof}}
\newcommand{\paranthesis}[1]{\left(#1\right)}
\newcommand{\s}{\star}
\newcommand{\lt}{\left}
\newcommand{\rt}{\right}
\DeclareFontFamily{U}{mathx}{\hyphenchar\font45}
\DeclareFontShape{U}{mathx}{m}{n}{
      <5> <6> <7> <8> <9> <10>
      <10.95> <12> <14.4> <17.28> <20.74> <24.88>
      mathx10
      }{}
\DeclareSymbolFont{mathx}{U}{mathx}{m}{n}
\DeclareMathAccent{\widecheck}{0}{mathx}{"71}
\DeclareMathAccent{\wideparen}{0}{mathx}{"75}
\newcommand{\DOWN}[1]{{\widecheck{#1}}}
\newcommand{\up}[1]{{\hat{#1}}}
\newcommand{\UP}[1]{{\widehat{#1}}}
\newcommand{\set}[1]{\left\{#1\right\}}
\newcommand{\midd}{\,\middle\vert\,}
\newcommand{\eps}{\ensuremath{\varepsilon}}
\DeclareMathOperator{\poly}{poly}
\DeclareMathOperator{\aff}{aff}
\DeclareMathOperator*{\argmin}{arg\,min}
\DeclareMathOperator{\conv}{conv}
\DeclareMathOperator{\lsp}{span}
\DeclareMathOperator{\pos}{pos}
\newcommand{\convv}[1]{\conv\paranthesis{#1}}
\newcommand{\poss}[1]{\pos\paranthesis{#1}}
\newcommand{\lspp}[1]{\lsp\paranthesis{#1}}
\newcommand{\N}{\ensuremath{\mathbb{N}}}
\newcommand{\Q}{\ensuremath{\mathbb{Q}}}
\newcommand{\R}{\ensuremath{\mathbb{R}}}
\newcommand{\Rp}{\ensuremath{\mathbb{R}_+}}
\newcommand{\Z}{\ensuremath{\mathbb{Z}}}
\newcommand{\ve}[1]{{\boldsymbol #1}}
\newcommand{\0}{\ve{0}}
\newcommand{\1}{\ve{1}}
\newcommand{\e}{\ve{e}}
\newcommand{\bb}{\ve{b}}
\newcommand{\cc}{\ve{c}}
\newcommand{\hh}{\ve{h}}
\newcommand{\pp}{\ve{p}}
\newcommand{\qq}{\ve{q}}
\newcommand{\rr}{\ve{r}}
\newcommand{\xx}{\ve{x}}
\newcommand{\TwoRowVec}[2]{\begin{pmatrix}#1\\ #2\end{pmatrix}}
\newcommand{\Landau}[2]{\expandafter\newcommand\csname
#1\endcsname[1]{\mathbin{\operatorname{#2}}{\left(##1\right)}}}
\newcommand{\cclasss}[2]{\abbrev{#1}{\textsf{#2}}}
\newcommand{\cclass}[1]{\cclasss{#1}{#1}}
\newcommand{\prob}[1]{\textsc{#1}}
\newcommand{\defprob}[2]{\abbrev{#1}{\prob{#2}}}
\newcommand{\mc}[1]{\ensuremath{\mathcal{#1}}}
\newcommand{\eq}[1]{(\ref{#1})}
\newcommand{\inst}{\mc{I}}
\newcommand{\sol}{\mc{S}}
\newcommand{\cost}{c}
\newcommand{\nbr}{N}
\title{Computational Aspects of the Colorful \Caratheodory Theorem\footnote{WM 
    was supported in part by DFG Grants MU 3501/1 and MU 3501/2 
   and ERC StG 757609. YS was
    supported by the Deutsche Forschungsgemeinschaft within the
    research training group `Methods for Discrete Structures' (GRK
    1408) and by GIF grant 1161. \\
    A preliminary version of this
    article appeared as W.~Mulzer and Y.~Stein.
   \emph{Computational Aspects of the Colorful \Caratheodory Theorem}.
   Proc.~31st SoCG, 2015.}
}
\author{Wolfgang Mulzer
    \and
      Yannik Stein}
\institution{Institut f\"ur Informatik, Freie Universit\"at Berlin, Germany\\
  \texttt{\{mulzer, yannikstein\}@inf.fu-berlin.de}
  }
\begin{document}
\maketitle

\begin{abstract}
Let $C_1,\dots,C_{d+1}\subset\R^d$ be $d+1$ point sets, each containing the
origin in its convex hull. We call these sets \emph{color classes}, and we 
call a sequence $p_1, \dots, p_{d+1}$ with $p_i \in C_i$,
for $i = 1, \dots, d+1$, a \emph{colorful choice}.
The colorful \Caratheodory theorem guarantees the existence of a 
\emph{colorful choice} that also contains the origin in its convex hull. 
The computational complexity of finding such a colorful choice (\CCP) is 
unknown. This is particularly interesting in the light of polynomial-time 
reductions from several related problems, such
as computing centerpoints, to \CCP.

We define a novel notion of approximation that is compatible with the
polynomial-time reductions to \CCP: a sequence that contains at most 
$k$ points from each color class is called a \emph{$k$-colorful choice}. 
We present an algorithm that for any fixed $\eps > 0$, outputs an 
$\lceil \eps d\rceil$-colorful choice
containing the origin in its convex hull in polynomial time.

Furthermore, we consider a related problem of \CCP: in the \emph{nearest
colorful polytope} problem (\NCP), we are given sets 
$C_1,\dots,C_n\subset\R^d$ that do not necessarily contain the origin 
in their convex hulls. The goal is to find a colorful choice whose 
convex hull minimizes the distance to the origin.
We show that computing a local optimum for \NCP is \PLS-complete, while
computing a global optimum is \NP-hard.
\end{abstract}

\section{Introduction}
Let $P\subset\R^d$ be a point set. \Caratheodory's well-known
theorem~\cite[Theorem~1.2.3]{Matouvsek2002} states that the containment 
of each point in $\conv(P)$ can be witnessed by a ``small'' subset of $P$. 
Moreover, the standard
proof of this result is constructive and gives a polynomial-time
algorithm if the coefficients of the original convex combination are known. 
In the following, we say that $P$ \emph{embraces} a point $\qq \in\R^d$ or 
is \emph{$\qq$-embracing} if and only if $\qq$ is in the 
convex hull of $P$. Similarly, we say $P$ \emph{ray-embraces} $\qq$ if
and only if $\qq$ is in the cone spanned by $P$.

\begin{theorem}[\Caratheodory's theorem]\label{thm:caratheodory}
  Let $P=\{\pp_1,\dots,\pp_n\} \subset \R^d$ be a set of $n$ points.
\begin{description}
  \item[(Convex version)]
  If $P$ embraces the origin, there is an affinely independent subset 
  $P' \subseteq P$ that embraces the origin.
  \item[(Cone version)] If $P$ ray-embraces a point $\bb \in \R^d$, there
  is a linearly independent subset $P' \subseteq P$ that ray-embraces $\bb$.
\qed
\end{description} 
\end{theorem}

As we will discuss in Section~\ref{sec:prelim}, the standard proof of
Theorem~\ref{thm:caratheodory} is constructive and can be interpreted as a
polynomial-time algorithm. \Barany~\cite{Barany1982} generalized
\Caratheodory's theorem by introducing colors: now, multiple
point sets embrace the origin, and we think of these point sets as 
\emph{color classes}. Then, there is a sequence of points, one from each
color class, that also embraces the origin.
This is called a \emph{colorful choice}. See
Figure~\ref{fig:colcara} for an example.

\begin{figure}[htbp]
  \begin{center}
    \includegraphics{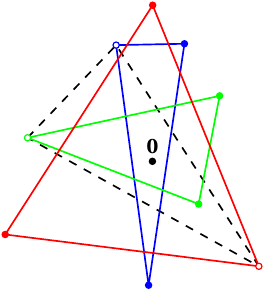}
  \end{center}
  \caption{The colorful \Caratheodory theorem in 
  two dimensions: all color classes embrace the origin 
  and the marked points form a $\0$-embracing colorful choice.}
  \label{fig:colcara}
\end{figure}

\begin{theorem}[Colorful \Caratheodory theorem~\cite{Barany1982}]
  \label{thm:colcara}
  Let $C_1,\dots,C_{d+1} \subset \R^d$ be point sets that all 
  embrace the origin.  There exists a colorful
  choice that embraces the origin.
\end{theorem}
\begin{prf}
  Let $C$, $|C|\leq d+1$, be a colorful
  choice of $C_1,\dots,C_{d+1}$. Let $\Phi(C)$ be the
  minimum $\ell_2$-distance between any point in $\conv(C)$ and the 
  origin.  If $\Phi(C) = 0$, then $\0 \in\conv(C)$, and we 
  are done. Now, assume $\Phi(C) > 0$.
  Let $\cc$ be the point in $\conv(C)$ with minimum
  $\ell_2$-distance to the origin. Furthermore, let $h^-$ be the 
  open halfspace that contains
  the origin and that is bounded by the hyperplane through $\cc$ that is
  orthogonal to $\cc$ interpreted as a vector.
  Since $\cc$ minimizes the distance to the origin, it is contained in a
  facet of $\conv(C)$. Note that $\cc$ is not necessarily contained in the
  interior of a facet.  Theorem~\ref{thm:caratheodory} implies that 
  there is a $d$-subset $F \subset C$ of $C$ with 
  $\cc \in \conv(F)$. Let $i^\times$ be the
  color of the point that is missing in $F$. The halfspace $h^-$ contains 
  the origin, and thus it contains at least one point $\cc_{i^\times} \in
  C_{i^\times}$ with color $i^\times$.
  Now, set $C' = (F \cup \{\cc_{i^\times}\})$. Since
  $\conv(C')$ contains $\cc$ and a point in $h^-$, we have 
  $\Phi(C') < \Phi(C)$. Thus, if $\Phi(C) > 0$, there is
  always a way to strictly decrease it. The situation is depicted in
  Figure~\ref{fig:proof_colcara}.
  Because there is only a finite number of colorful
  choices, there is a colorful choice 
  $C^\s$ with $\Phi(C^\s) = 0$.
\end{prf}

\begin{figure}[htbp]
  \begin{center}
    \includegraphics{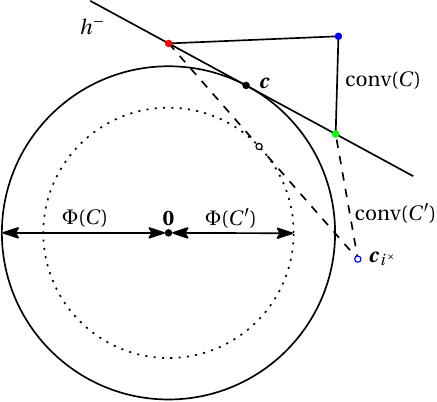}
  \end{center}
  \caption[Proof of the colorful \Caratheodory theorem.]{Proof of the 
  colorful \Caratheodory theorem: if the potential
  function is larger than $0$, it can be decreased by swapping one 
  point with another point of the same color.}
  \label{fig:proof_colcara}
\end{figure}

The convex version of Theorem~\ref{thm:caratheodory} can be derived
directly from Theorem~\ref{thm:colcara} by setting 
$C_1=\dots=C_{d+1}=P$. There are many different variants and
generalizations of the colorful \Caratheodory theorem 
(see~\cite{MeunierDe2013}).

We denote with \CCP the computational problem of finding a $\0$-embracing
colorful choice under the conditions of Theorem~\ref{thm:colcara}.
\CCP is particularly interesting in the light of its
applications: let $P \subset \R^d$ be a point set. We call a partition of 
$P$
into $r$ sets $P_1,\dots,P_r$ a \emph{Tverberg $r$-partition} if the convex
hulls of the $P_i$ have a point in common. By Tverberg's
theorem~\cite{Tverberg1966}, there always exists a Tverberg 
$\lceil |P| / (d+1)\rceil$-partition.
We denote the computational problem of finding such a partition
by \Tverberg. Sarkaria's proof~\cite{Sarkaria1992}
of Tverberg's theorem can be interpreted as polynomial-time
reduction of \Tverberg to \CCP. Moreover, Tverberg's theorem 
directly implies
the centerpoint theorem~\cite{Rado1946} that guarantees the existence of
\emph{centerpoints}, a popular generalization of the 
median to higher dimensions. We call a
point $\qq \in
\R^d$ a \emph{centerpoint} for $P$ if any closed halfspace that 
contains $\qq$ also
contains at least $\lceil |P| / (d+1) \rceil$ points from $P$. Consider a
Tverberg $r$-partition $P_1,\dots,P_r$ of $P$ for $r = \lceil |P| / (d+1)
\rceil$. Then any
point in $\bigcap_{i=1}^r \conv(P_i) \neq \emptyset$ is a centerpoint. 
Hence, the computational problem of computing
centerpoints, \Centerpoint, can again be reduced in polynomial time to 
\CCP. Furthermore, the key argument of Sarkaria's proof of Tverberg's 
theorem can also be used to prove the \emph{colorful Kirchberger 
theorem}~\cite{ArochaBaBrFaMo2009}: given
$n$ Tverberg $r$-partitions $\mc{T}_1, \dots, \mc{T}_n$ for
disjoint $d$-dimensional point sets of size $n$ and $r = \lceil n / (d+1)
\rceil$, a Tverberg $r$-partition $\mc{T}$ can be
constructed by taking exactly one point from each $\mc{T}_i$ and putting 
it in the set of $\mc{T}$ with the same index as in $\mc{T}_i$. Again, 
the proof can be interpreted as a polynomial-time reduction to \CCP from 
\ColKirchberger, 
the computational problem corresponding to the colorful Kirchberger theorem.
We discuss these reductions in more detail in
Section~\ref{sec:kcol:applications}.

In contrast to \Caratheodory's theorem, the complexity of \CCP is
still unsettled. Since a solution always exists and can be verified in
polynomial-time, \CCP is contained in
the complexity class \emph{total function NP} (\TFNP). This already 
implies that
\CCP is not \NP-hard unless $\NP =
\coNP$~\cite[Theorem~2.1]{MegiddoPa1991},~\cite[Lemma~4]{JohnsonPaYa1988}.
In a recent result, Meunier \etal~\cite{MeMuSaSt2017} showed
that \CCP is contained in the intersection of two important subclasses 
of \TFNP:
\emph{polynomial parity argument in a directed graph} (\PPAD) and
\emph{polynomial-time local search} (\PLS).
Moreover, Meunier and Sarrabezolles~\cite{MeunierSa2014} have shown that
a related problem is \PPAD-complete:
given $d+1$ pairs of points $P_1,\dots,P_{d+1}\in \Q^d$ and a 
colorful choice that embraces the origin, find
another colorful choice that embraces the origin.
Complementary to this result, we show in Section~\ref{sec:ncp} 
that a related
problem is \PLS-complete, the \emph{nearest colorful polytope problem} 
(\NCP): given $n$ color classes $C_1,\dots,C_n$, find a
colorful choice whose distance to the origin cannot be decreased 
by swapping one 
point with another point of the same color. This problem is motivated by
\Barany's proof of Theorem~\ref{thm:colcara}. Furthermore, we show that the
global search variant of \NCP is \NP-hard, which answers a question by 
\Barany and Onn~\cite{BaranyOn1997}. This question was also answered 
independently by Meunier and Sarrabezolles~\cite{MeunierSa2014}.

Despite the recent improvements on the upper bounds on the complexity 
of \CCP, a polynomial-time algorithm remains elusive. Hence, 
approximation algorithms are of interest. This was first considered 
by \Barany and Onn~\cite{BaranyOn1997}
who described how to find a colorful choice
whose convex hull is ``close'' to the origin under several general position
assumptions. We call a set \emph{$\eps$-close} to the origin if its convex 
hull
has $\ell_2$-distance at most $\eps$ to $\0$. Let in the following 
$\eps, \rho > 0$ be
parameters. Given $d+1$ sets $C_1,\dots,C_{d+1}\in \Q^d$ such that
\begin{enumerate}[label=(\roman{enumi})]
\item each $C_i$, $i \in [d+1]$, contains a ball of radius $\rho$ 
centered at
the origin in its convex hull, and
\item all points $\pp\in C_i$, $i \in[d+1]$, fulfill 
$1\leq \|\pp\| \leq 2$.
\end{enumerate}
Then, the algorithm by \Barany and Onn iteratively computes a sequence of
colorful choices such that the $\ell_2$-distances of their convex hulls 
to the origin
strictly decrease until a colorful choice that embraces the origin is
found. In particular, if stopped earlier, a colorful choice that is 
$\eps$-close to $\0$
can be computed in time $\text{poly}(L, \log(1/\eps),1/\rho)$ on the
\textsc{Word-Ram} with logarithmic costs. Here, $L$ denotes the
length of the bit-encoding of the input points. Note that if $1/\rho =
\Oh{\text{poly}(L)}$, the algorithm actually finds a colorful choice 
that embraces the origin in polynomial-time. The \Barany-Onn algorithm
is essentially the algorithm from the proof of the convex version of
Theorem~\ref{thm:colcara}, and the main contribution is a careful analysis.

In the same spirit, Barman~\cite{barman2015} showed
that if the points have constant norm, a colorful choice that is 
$\eps$-close to
the origin can be found in $d^{\Oh{1/\eps^2}}L$ time, where $L$ is again
the length of the input encoding. The algorithm uses
the following approximate version of \Caratheodory's theorem as a main
ingredient: let $P \subset \R^d$ be a $\0$-embracing point set.
Then, for any $\eps > 0$, there exists a subset $P' \subseteq P$ of size
$c_{\eps} = \Oh{\max_{\pp \in P} \| \pp \| / \eps^2}$ that is 
$\eps$-close to
$\0$. This immediately implies a simple brute-force
algorithm: let $C_1,\dots, C_{d+1} \subset \Q^d$ be point sets with 
$\0 \in \conv(C_i)$, for $i \in [d+1]$, and assume all points 
have constant norm. Let further $C \subseteq
\bigcup_{i=1}^{d+1} C_i$ be a $\0$-embracing colorful choice whose 
existence is guaranteed by Theorem~\ref{thm:colcara}. Then, the 
approximative version of \Caratheodory's theorem asserts that there 
is a subset $C' \subseteq C$ of size $c_{\eps}$ that is $\eps$-close 
to the origin. We can now guess $C'$ by trying out all
$\binom{d+1}{c_{\eps}}$ possibilities for the colors in $C'$, and 
for each color $i$, we try all $|C_i|$ possibilities
of picking a point with color $i$. For each choice of $C'$, we can
check whether it is $\eps$-close to the origin by solving a convex 
quadratic program. Solving one convex quadratic program needs 
$\Oh{\poly(d)L}$ time~\cite{KapoorVa1986,kozlov1980polynomial}. 
Hence, assuming that each color class
is of size $\Oh{d}$, we can compute an $\eps$-close colorful choice in
$d^{\Oh{1/\eps^2}}L$ time.

It is desirable to approximate \CCP in a way that is compatible with the
polynomial-time reductions to it. Then, good enough approximation algorithms
for \CCP can be converted to approximation algorithms for \Tverberg,
\Centerpoint, and \ColKirchberger. Both approximation algorithms above 
relax the requirement that
the resulting colorful choice embraces the origin. However, in the
polynomial-time reductions from \Tverberg, \Centerpoint, and 
\ColKirchberger to \CCP, it is crucial that the colorful choice embrace 
the origin. If the convex hull is only close to the origin but does not 
contain it, the reductions 
break down, and it is not immediate how to fix them. On
the other hand, allowing multiple points from each color class has a natural
interpretation in the polynomial-time reductions to \CCP and leads to
approximation algorithms for the other problems. Let 
$C_1,\dots,C_{d+1} \subset \R^d$ be point sets that embrace the origin and 
let $k \in \N$ be a number. We call a set 
$C \subseteq \bigcup_{i=1}^{d+1} C_i$ a \emph{$k$-colorful choice} if it
contains at most $k$ points from each $C_i$. In 
Section~\ref{sec:kcol:applications}, we assume an oracle that 
computes $\0$-embracing $k$-colorful
choices, and we give precise bounds on the quality of the
approximation algorithms for
\Tverberg, \Centerpoint, and \ColKirchberger depending on $k$. 
We obtain these bounds by combining this
oracle with the polynomial-time reductions. Furthermore, in
Section~\ref{sec:kcol}, we present an algorithm that computes for any fixed
$\eps > 0$, a $\0$-embracing $\lceil \eps d\rceil$-colorful choice.

\section{Preliminaries: Embracing Equivalent Points}
\label{sec:prelim}
Throughout the paper, vectors or points are set in
boldface. The origin is denoted by $\0$, the canonical basis of $\R^d$ is
denoted by $\e_1,\dots,\e_d$, and the all-ones vector 
$\sum_{i=1}^{d} e_i$ is denoted by $\1$. For a set of points 
$P = \lt\{\pp_1, \dots, \pp_n \rt\} \subset\R^d$, we denote by
\begin{itemize}
  \item $\lspp{P} =
  \lt\{\sum_{i = 1}^n \phi_i \pp_i \midd \phi_i \in \R\rt\}$
  its linear span and the subspace orthogonal to $\lsp(P)$ by
  ${\lspp{P}}^\perp = \lt\{v \in \R^d \midd \forall p \in \lsp(P):
    \langle v, p \rangle = 0\rt\}$;
  \item $\aff(P) = \lt\{\sum_{i = 1}^n \alpha_i \pp_i \midd
    \alpha_i \in \R, \sum_{i = 1}^n \alpha_i = 1 \rt\}$ its affine hull;
  \item $\pos(P) = \lt\{\sum_{i=1}^n \psi_i \pp_i \midd \psi_i \in 
  \R_+ \rt\}$ all linear combinations with nonnegative coefficients. 
  We call $\pos(P)$ the \emph{positive span} of $P$ and we call a 
  combination with nonnegative coefficients a \emph{positive combination};
  \item $\conv(P) = 
  \lt\{\sum_{i = 1}^n \lambda_i \pp_i \midd \lambda_i \in \R_+,
    \sum_{i=1}^n \lambda_i = 1 \rt\}$ its convex hull;
  \item $\dim P$ the dimension of $\lsp(P)$;
\end{itemize}

Unless noted otherwise, all algorithms are analyzed in the
\RealRAM model of 
computation~\cite[Chapter~1.4]{PreparataSh1985}.\footnote{Recall
that the \RealRAM is the standard model of computational geometry
where memory cells store arbitrary real numbers and operations
on them can be performed at unit cost. We emphasize that there is no known 
algorithm for solving linear programs
that needs a polynomial number of steps on the \RealRAM. Thus,
our algorithms avoid the use of LPs.}
We begin with a constructive
version of Theorem~\ref{thm:caratheodory}.

\begin{lemma}[Constructive version of \Caratheodory's theorem]
\label{lem:constr_cara}
Suppose that $P \subset \R^d$ is a $\0$-embracing point set.
Given the coefficients of the convex combination of $\0$ with the points in
$P$, a $\0$-embracing affinely independent subset $P' \subseteq P$ can be
computed in $\Oh{d^3 |P| + |P|^2}$ time.
\end{lemma}
\begin{prf}
The standard proof of Theorem~\ref{thm:caratheodory} is already 
constructive. We repeat it briefly before analyzing its running time 
when interpreted as an algorithm.

Assume $P$ is affinely dependent. Let $\pp_1,\dots,\pp_n$ denote the 
points in $P$ and let
$\alpha_1,\dots,\alpha_n \in \R$ be coefficients of a nontrivial affine
dependency, i.e., let
\begin{equation}\label{eq:cara:1}
\0 = \alpha_1 \pp_1 + \dots + \alpha_n \pp_n
\end{equation}
with $\sum_{i=1}^{n} \alpha_i = 0$ and $\alpha_i > 0$ for some $i \in [n]$.
Furthermore, because $\0 \in \conv(P)$, there are coefficients
$\lambda_1,\dots,\lambda_n \in \R_+$ such that
\begin{equation}\label{eq:cara:2}
\0 = \lambda_1 \pp_1 + \dots + \lambda_n \pp_n
\end{equation}
and $\sum_{i=1}^{n} \lambda_i = 1$.
Let $c \in \R$ be a factor that is to be specified.
Scaling (\ref{eq:cara:1}) by $c \in \R$ and subtracting it from
(\ref{eq:cara:2}), we obtain
\[
 \0 = \sum_{i=1}^{n} \lambda_i \pp_i - c \sum_{i=1}^{n}  \alpha_i \pp_i
  = \sum_{i=1}^{n} \lambda'_i \pp_i,
\]
where $\lambda'_i = \lambda_i - c \alpha_i$.  Thus, let $i^\star =
\argmin\lt\{\lambda_i/\alpha_i
\midd i \in [n], \alpha_i > 0\rt\}$, where ties are broken arbitrarily, 
and set $c = \lambda_{i^\star} / \alpha_{i^\star}$. Then,
$\sum_{i=1}^{n} \lambda'_i \pp_i$ is a convex combination
of $\0$ with the points in $P \setminus \{\pp_{i^\star}\}$.
Indeed by definition of $i^\star$, we have $\lambda'_i = \lambda_i - c
\alpha_i \geq 0$, $\sum_{i=1}^{n} \lambda'_i = \sum_{i=1}^{n} (\lambda_i - c
\alpha_i) = \sum_{i=1}^{n} \lambda_i = 1$, and $\lambda'_{i^\star}= 0$.
A repeated removal of points until the remaining set is affinely independent
implies the statement.

It remains to show the running time. We compute in each
iteration a linear
dependency by Gaussian elimination in $\Oh{d^3}$ time.\footnote{On
the \RealRAM, we need not worry about the bit-complexity
of Gaussian elimination.}
By our assumption, we know the convex coefficients 
$\lambda_1,\dots,\lambda_n$ and thus, we can find the
point $\pp_{i^\star}\in P$ in $\Oh{n}$ time. Furthermore, we can compute the
new coefficients $\lambda'_i \in \Rp$, $i \in [n] \setminus \{i^\star\}$,
from $\lambda_1,\dots,\lambda_n$, the coefficients of the affine 
dependency, and the index $i^\star$ in $\Oh{n}$ time. Hence, one
iteration takes $\Oh{d^3 + n}$ time and since there are $\Oh{n}$ 
iterations, the algorithm needs in total $\Oh{d^3 n + n^2}$ time.
\end{prf}

In Section~\ref{sec:kcol}, we present two approximation algorithms 
that follow the same strategy: begin with a complete color class and then
replace a subset by points from other color classes while maintaining the
property that the origin is embraced. We conclude this section with the
necessary tools to implement the replacement step.

Let $C \subset \R^d$ be a $\0$-embracing point set. We say $C$ is
\emph{minimally $\0$-embracing} if $C \setminus \{\cc\}$ is not 
$\0$-embracing for all points $\cc \in C$.

\begin{figure}[htbp]
  \begin{center}
    \includegraphics{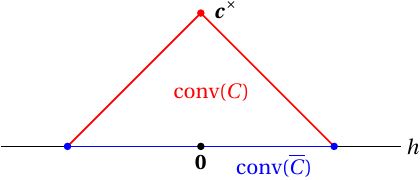}
  \end{center}
  \caption[Minimal $\0$-embracing sets.]{The blue points constitute the 
  linearly dependent set $\overline{C}$.
  The removal of $\cc^\times$ maintains the embrace of the origin.}
  \label{fig:minembr}
\end{figure}

\begin{lemma}\label{lem:lindepembracing}
  Let $C \subset \R^d$ be an affinely independent $\0$-embracing set. 
  Then, a subset $C'$ of $C$ is linearly dependent if and only if 
  $C'$ embraces the origin.
\end{lemma}
\begin{prf}
  Clearly, all $\0$-embracing subsets of $C$ must be linearly
  dependent. Let now $C'$ be a linearly dependent subset of $C$. We need 
  to show that $C'$ is $\0$-embracing. Assume without loss of generality 
  that $C'$ is a proper subset and let $\cc^\times \in C \setminus C'$ be 
  a missing point.  We prove that the set 
  $\overline{C} = C \setminus \{\cc^\times\}$ is $\0$-embracing. 
  A repeated application of this argument then implies the statement.

  Since $C' \subseteq \overline{C}$, the set $\overline{C}$ is linearly
  dependent. Thus, we can write $\0$ as a nontrivial linear
  combination $\sum_{\cc \in \overline{C}} \phi_\cc \cc$ of
  the points in $\overline{C}$, where $\phi_{\cc} \in \R$, for all $\cc \in
  \overline{C}$. Furthermore, since $C$ is affinely independent,
  so is $\overline{C}$, and hence $\sum_{\cc \in \overline{C}} \phi_\cc 
  \neq 0$.
  By rescaling the coefficients, we obtain an affine combination of $\0$.
  This implies $\aff(\overline{C}) = \lsp(\overline{C})$. Now, because
  $\overline{C} = C \setminus \{\cc^\times\}$ and because $C$ is affinely
  independent, the point $\cc^\times$ is not contained in the affine hull of
  $\overline{C}$ and thus not in the linear span of $\overline{C}$. Then, 
  there exists a hyperplane $h$ that contains $\lsp(\overline{C})$ but 
  not $\cc^\times$.  See Figure~\ref{fig:minembr}.
  Because $\conv(C)$ is on one side of $h$, the
  intersection $h \cap \conv(C) = \conv(\overline{C})$ is a face of
  $\conv(C)$. Since $h$ and $\conv(C)$ both contain the origin, the face
  $\conv(\overline{C})$ must contain the origin, too. Hence, 
  $\overline{C}$ is $\0$-embracing.
\end{prf}

\begin{lemma}\label{lem:minembr}
  Let $C \subset \R^d$ be a minimally $\0$-embracing set. Then, the 
  following holds:
  \begin{enumerate}[label=(\roman{enumi})]
    \item\label{lem:minembr:sublinindep} $C$ is affinely independent and all
    proper subsets of $C$ are linearly independent.
    \item\label{lem:minembr:rayembr} For all $\cc \in C$, the point 
    $-\cc$ is ray-embraced by $C \setminus \{\cc\}$.
  \end{enumerate}
  In particular, $\dim C = |C|-1$ and $\pos(C) = \lsp(C)$.
\end{lemma}
\begin{prf}
  If $C$ is affinely dependent, then by Theorem~\ref{thm:caratheodory} 
  there exists
  a proper subset that embraces the origin. Thus, $C$ must be affinely
  independent. Hence,~\ref{lem:minembr:sublinindep} is implied by
  Lemma~\ref{lem:lindepembracing}. Write now $C$ as $\cc_1,\dots,\cc_n$ 
  and let $\lambda_1,\dots,\lambda_n \in \R_+$ be coefficients that sum 
  to $1$ such that $\0 = \sum_{i=1}^{m} \lambda_i \cc_i$. Then, 
  $-\lambda_i \cc_i \in \pos(C)$ for all $i \in [n]$. Because 
  $C \setminus \{\cc\}$ does not
  embrace the origin for any $\cc \in C$, we have 
  $\lambda_i > 0$ for $i \in [n]$. This implies~\ref{lem:minembr:rayembr}.
\end{prf}

Using the fact that all proper subsets of a minimally $\0$-embracing set 
$C$ are linearly independent, we show how to compute for each point in the 
positive span of $C$ the coefficients of the positive combination.

\begin{lemma}\label{lem:minembcoeff}
  Let $C \subset \R^d$ be a minimally $\0$-embracing set and let $\qq \in
  \poss{C}$ be a point. Then, we can compute the
  coefficients of a nontrivial positive combination of $\qq$ with 
  the points in $C$ in $\Oh{d^4}$ time.
\end{lemma}
\begin{prf}
  Consider first the case that $\qq = \0$.
  Let $\cc^\times \in C$ be an arbitrary point and denote
  with $\overline{C} = C \setminus \{\cc^\times\}$ the remaining points.
  By Lemma~\ref{lem:minembr}, $-\cc^\times$ is ray-embraced by 
  $\overline{C}$. Thus, the
  linear system $A \xx = -\cc^\times$, where $A$ is the matrix whose 
  columns are the
  points from $\overline{C}$, has a solution. By
  Lemma~\ref{lem:minembr}~\ref{lem:minembr:sublinindep}, the
  set $\overline{C}$ is linearly independent and hence this solution is 
  unique.
  Thus, we
  can compute the coefficients $\psi_{\cc} \in \R$, $\cc
  \in \overline{C}$, such that $-\cc^\times = \sum_{\cc \in \overline{C}}
  \psi_{\cc} \cc$ in $\Oh{d^3}$ time with Gaussian
  elimination. Moreover, since the solution is unique, we must have
  $\psi_{\cc} \geq 0$ for all $\cc \in \overline{C}$. Set 
  $\psi_{\cc^\times}$ to $1$.
  Then, $\0 = \sum_{\cc \in C} \psi_{\cc} \cc$, all coefficients are
  nonnegative, and not all coefficients are zero.

  Consider now the case that $\qq \neq \0$. We iterate through all 
  $\cc^\times \in C$ and solve the linear system $L_{\cc^\times}:
  A \xx = \qq$, where the columns of $A$ are
  the points in $C \setminus \{\cc^\times\}$. Again by
  Lemma~\ref{lem:minembr}~\ref{lem:minembr:sublinindep}, the columns 
  of $A$ are
  linearly independent and hence the solution $\xx_{\cc^\times}$ to 
  $L_{\cc^\times}$ is unique, if it exists. If 
  $\xx_{\cc^\times} \geq \0$, we have found the desired coefficients.
  By Theorem~\ref{thm:caratheodory}, there exists a proper subset 
  $C'$ of $C$ that
  ray-embraces $\qq$ and thus there exists a point 
  $\cc^\star \in C$ for which
  $\xx_{\cc^\star} \geq \0$. Solving the linear system 
  $L_{\cc^\times}$ takes $\Oh{d^3}$
  time for each point $\cc^\times \in C$ with Gaussian elimination, 
  and hence we need
  $\Oh{d^4}$ time in total until finding the $\qq$-embracing subset $C
  \setminus \{\cc^\star\}$
  together with the coefficients of the positive combination.
\end{prf}

We can now combine the previous results to show that given a 
$\0$-embracing set, we can find a minimally $\0$-embracing subset in 
polynomial time together with the coefficients of the convex combination 
of the origin.

\begin{lemma}\label{lem:findminembr}
  Let $C \subset \R^d$ be a $\0$-embracing set of size $n$. Given the
  coefficients of the convex combination of $\0$ with the points in $C$,
  we can find a minimally
  $\0$-embracing subset $C' \subseteq C$ and the coefficients of the convex
  combination of $\0$ with the points in $C'$
  in $\Oh{n^2 + n d^3 + d^4}$ time.
\end{lemma}
\begin{prf}
  First, we apply Lemma~\ref{lem:constr_cara} to obtain an affinely 
  independent subset $C'$ of $C$ that embraces the origin. Then, we 
  iteratively test for each point $\cc \in C'$ whether the set 
  $C' \setminus \{\cc\} $ is linearly dependent. If so, we remove 
  $\cc$ from $C'$. After iterating through all
  points, the resulting set still embraces the origin by
  Lemma~\ref{lem:lindepembracing} and moreover, since no proper subset is
  linearly dependent, it is minimally $\0$-embracing.

  The initial application of Lemma~\ref{lem:constr_cara} needs 
  $\Oh{n^2 + n d^3}$ time. Then, checking for one point $\cc \in C'$ 
  whether $C' \setminus \{\cc\}$ is linearly dependent takes 
  $\Oh{d^3}$ time with Gaussian elimination. Because
  $C'$ is affinely independent, we have $|C'|\leq d+1$ and thus the claimed
  running time follows.
\end{prf}

Let now $Q \subset \R^d$ be a set and let $C \subset \R^d$ be a
$\0$-embracing set, as before.  We say a subset $C'$ of $C$ is 
\emph{$\0$-embracing equivalent} to $Q$ with respect to
$C$ if $(C \setminus C') \cup Q$ embraces $\0$.
In the following, we show that if $Q$ embraces the origin when orthogonally
projected onto $\lsp(C)^\perp$, there is always at least one point in 
$C$ that is $\0$-embracing equivalent to $Q$. See 
Figure~\ref{fig:dimreduct:minembreq}.

\begin{figure}[htbp]
  \begin{center}
    \includegraphics{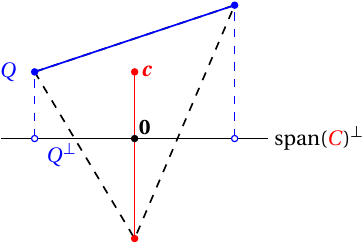}
  \end{center}
  \caption{An example of Lemma~\ref{lem:dimreduct}. The red points 
  constitute the minimal $\0$-embracing set $C$
  and the blue points constitute the set $Q$ that embraces the origin when
  projected onto $\lsp(C)^\perp$. The point $\cc \in C$ is $\0$-embracing
  equivalent to $Q$.}
  \label{fig:dimreduct:minembreq}
\end{figure}

\begin{figure}[htbp]
  \begin{center}
    \includegraphics[page=2]{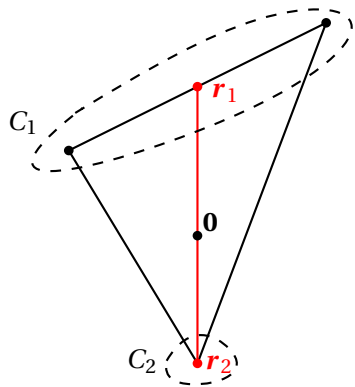}
  \end{center}
  \caption{An example of Lemma~\ref{lem:representatives}. The set 
  $C$ consists of the vertices of the simplex, and the two 
  representative points are with respect to the indicated partition.}
  \label{fig:dimreduct:rep}
\end{figure}

\begin{lemma}\label{lem:dimreduct}
  Let $C \subset \R^d$ be a $\0$-embracing set and let $Q$
  be a set whose orthogonal projection $Q^\perp$ onto $\lsp(C)^\perp$ 
  embraces $\0$. Then, there exists a point $\cc \in C$ that is 
  $\0$-embracing equivalent to $Q$ with respect to $C$. Furthermore, 
  if both $C$ and $Q^\perp$ are minimally $\0$-embracing, we can 
  compute $\cc$ together with the coefficients
  of the convex combination of $\0$ with the points in 
  $(C \setminus \{\cc\}) \cup Q$ in $\Oh{d^4}$ time.
\end{lemma}
\begin{prf}
  We first prove that there is always a point in $C$ that is $\0$-embracing
  equivalent to $Q$. After that, we show how to find this point efficiently.
  We can assume without loss of generality that $C$ is minimally 
  $\0$-embracing, since otherwise the statement holds trivially. 
  Let now $\qq_1, \dots, \qq_m
  \in \R^d$ denote the points in $Q$ and write each
  $\qq_i$, $i \in [m]$, as the sum of a vector $\pp_i \in \lsp(C)$ and a 
  vector $\pp_i^\perp \in \lsp(C)^\perp$. Because $Q$ projected onto 
  $\lsp(C)^\perp$ is $\0$-embracing, there are coefficients 
  $\lambda_1,\dots,\lambda_m \in \R_+$ that sum to $1$ such that 
  $\0 = \sum_{i=1}^{m} \lambda_i\pp_i^\perp$.
  Consider the convex combination 
  $\qq = \sum_{i=1}^{m} \lambda_i \qq_i$ of the
  points in $Q$ with the same coefficients. Since
  \[
  \qq= \sum_{i=1}^{m} \lambda_i \lt(\pp_i + \pp^\perp_i\rt) =
  \lt(\sum_{i=1}^{m} \lambda_i \pp_i \rt) +\lt(\sum_{i=1}^{m} \lambda_i
  \pp^\perp_i\rt) =
  \sum_{i=1}^{m} \lambda_i \pp_i,
  \]
  the point $\qq$ is contained in $\lsp(C)$. By Lemma~\ref{lem:minembr},
  we have $\pos(C) = \lsp(C)$ and hence $-\qq$ is
  ray-embraced by $C$. Now, the cone version of 
  Theorem~\ref{thm:caratheodory} states
  that there is a linearly independent subset $C'$ of $C$ that 
  ray-embraces $-\qq$.
  Because $\dim C = |C|-1$ by Lemma~\ref{lem:minembr},
  the set $C'$ must be a proper subset. Then, $Q$ is $\0$-embracing
  equivalent to all points in $C \setminus C' \neq \emptyset$.

  It remains to show how to find a point in $C \setminus C'$. Recall that
  we assume that both $C$ and $Q^\perp$ are minimally $\0$-embracing, 
  where $Q^\perp$ is the orthogonal projection of $Q$ onto 
  $\lsp(C)^\perp$. Using the algorithm from Lemma~\ref{lem:minembcoeff}, 
  we compute the coefficients of the convex combination
  of the origin with the points in $Q^\perp$ and hence the point $-\qq$ in
  $\Oh{d^4}$ time. Applying Lemma~\ref{lem:minembcoeff} again,
  we can determine the coefficients of the positive
  combination of $-\qq$ with the points in $C$ in $\Oh{d^4}$ time. 
  Similar to the algorithm from Lemma~\ref{lem:findminembr}, we  
  try all $(|C|-1)$-subsets of $C$ until
  we find the linearly independent subset of $C$ that 
  ray-embraces $-\qq$.
  Since the linear combination of $-\qq$ is unique, we thus obtain the
  minimally $(-\qq)$-ray-embracing
  subset $C'$ of $C$ in $\Oh{d^4}$ time. 
  Then, we can choose any point in $C \setminus C'$ as $\cc$. 
  Finally, since we know the coefficients of the convex
  combination of $\qq$ with the points in $Q$ and since we can apply
  Lemma~\ref{lem:minembcoeff} to compute the coefficients of the 
  positive combination
  of $-\qq$ with the points in $C'$, we can compute the coefficients of the
  convex combination of the origin with the points in $C' \cup Q$ by 
  rescaling appropriately. The algorithm takes in total $\Oh{d^4}$ time, 
  as claimed.
\end{prf}

Lemma~\ref{lem:dimreduct} by itself does not yet yield a 
nontrivial approximation algorithm. This is due to the weak guarantee 
that only a single point in $C$ is
$\0$-embracing equivalent to $Q$. To amplify the number of points that can 
be replaced, we conclude this section by showing how to compute a
set of \emph{representative points} $R$ for $C$. Each representative
point stands for a specific subset of $C$ such that if a point in $R$ is
$\0$-embracing equivalent to a set $Q$ with respect to $R$, then the
corresponding subset of $C$ is $\0$-embracing equivalent to $Q$ with 
respect to $C$. See Figure~\ref{fig:dimreduct:rep}.

\begin{lemma}\label{lem:representatives}
  Let $C \subset \R^d$ be a minimally $\0$-embracing set and let
  $C_1,\dots,C_m$ be a partition of $C$ into $m \geq 2$ sets with 
  $|C_i|\geq 1$, for all $i \in [m]$.
  Then, we can compute in $\Oh{d^4}$ time
  a set of points $R= \{\rr_1,\dots,\rr_m\}\subset\R^d$ with the following
  properties:
  \begin{enumerate}[label=(\roman{enumi})]
    \item $R$ is minimally $\0$-embracing.
    \item Let $Q \subset \R^d$ be a set that is $\0$-embracing equivalent to
    some point $\rr_j \in R$ with respect to $R$. Then, $Q$ is 
    $\0$-embracing equivalent to $C_j$ with respect to $C$.
  \end{enumerate}
We call the points in $R$ \emph{representative points} for $C$ with respect
to the partition $C_1,\dots,C_m$.
\end{lemma}

\begin{prf}
  Since $C$ is minimally $\0$-embracing, we can
  write $\0$ as a convex combination $\sum_{\cc \in C} \lambda_\cc \cc$ such
  that all $\lambda_\cc$ are strictly greater than $0$ and sum to $1$. 
  With the algorithm from Lemma~\ref{lem:minembcoeff}, we can compute 
  these coefficients in $\Oh{d^4}$ time. For $i \in [m]$,
  set $\rr_i$ to $\sum_{\cc \in C_i} \lambda_\cc \cc$. Clearly, $R$ is
  $\0$-embracing. Moreover, for all $j \in [m]$, the set 
  $\lt\{\rr_i \midd i \in [m],\, i\neq j\rt\}$ is not $\0$-embracing 
  since otherwise the set $\bigcup_{i=1,\, i\neq j}^m C_i$, a strict 
  subset of $C$, is $\0$-embracing, a contradiction to
  $C$ being minimally $\0$-embracing. Let now $Q$ be a set that
  is $\0$-embracing equivalent to some point
  $\rr_j \in R$ with respect to $R$. That is, the set 
  $Q \cup \lt(R \setminus
  \{\rr_j\}\rt)$ embraces the origin. 
  Because $\rr_i \in \poss{C_i}$, for $i \in
  [m]$, then the set $Q \cup \left(\bigcup_{i=1,\, i\neq j}^m
  C_i\right)$ is $\0$-embracing as well, and hence $Q$ is $\0$-embracing
  equivalent to $C_j$ with respect to $C$.
\end{prf}

\section{$k$-Colorful Choices}
\label{sec:kcol}
Lemmas~\ref{lem:dimreduct} and~\ref{lem:representatives} give
rise to a simple approximation algorithm.
Let $C_1,\dots,C_m \subset \R^d$ be $m$ color classes
that each embrace the origin, and set $k=\max\lt(d-m+2,
\lt\lceil\frac{d+1}{2}\rt\rceil\rt)$. Then, the following algorithm 
recursively computes a $\0$-embracing $k$-colorful choice.
First, we prune $C_1$ with Lemma~\ref{lem:findminembr} and 
partition it into two sets
$C',\,C''$ of size at most $\lt\lceil
(d+1) / 2\rt\rceil$. Using Lemma~\ref{lem:representatives}, we compute two
representative points $\rr',\, \rr''$ for this partition of $C_1$.
Then, we project the remaining $m-1$ color classes onto the
$(d-1)$-dimensional space that is orthogonal to 
$\lsp(\rr',\rr'')^\perp$, and we recursively compute a 
$\0$-embracing $k$-colorful choice $Q$ with respect to
the projections of $C_2,\dots,C_m$. By Lemmas~\ref{lem:dimreduct}
and~\ref{lem:representatives}, one of the two sets $C'$, $C''$, 
say $C'$, is $\0$-embracing
equivalent to $Q$ with respect to $C_1$. Since $Q$ is a
$k$-colorful choice that does not contain points from $C_1$ and since
$|C'|,|C''|\leq k$, the set $C'' \cup Q$ is a $\0$-embracing $k$-colorful
choice. The recursion stops once only one color class is left. Then, 
we are in dimension $d-m+1$. Since $d-m+2 \leq k$, pruning the 
single remaining color class with
Lemma~\ref{lem:findminembr} results already in a $\0$-embracing 
$k$-colorful choice.  For details, see Algorithm~\ref{alg:simapx}.

\begin{alg}
  \SetKwFunction{Recurse}{recurse}
  \KwIn{$m$ sets $C_1,\ldots,C_{m} \subset \R^{d}$ that each embrace 
  the origin, and for each $C_i$, $i \in [m]$, the coefficients 
  of the convex combination of $\0$ with the points in $C_i$}
  \KwOut{minimally $\0$-embracing 
  $\max\lt(d-m+2, \lt\lceil\frac{d+1}{2}\rt\rceil\rt)$-colorful choice}
  $C \gets $ prune $C_1$ with Lemma~\ref{lem:findminembr}\;
  \lIf{$m = 1$}{\Return{$C$}}
  $C',\, C'' \gets $ partition of $C$ into two sets, each of size at most
  $\lt\lceil\frac{d+1}{2}\rt\rceil$\;
  Compute representative points $\rr',\, \rr''$ for $C',\, C''$\;
  $\DOWN{C}_2, \dots, \DOWN{C}_m \gets$ orthogonal projection of
  $C_2,\dots,C_m$ onto $\lsp(\rr',\rr'')^\perp$\;
  $\DOWN{Q} \gets $ \Recurse{$\DOWN{C}_2, \dots, \DOWN{C}_m$}\;
  $Q \gets $ replace projected points in $\DOWN{Q}$ by original points from
  $\bigcup_{i=2}^m C_i$\;
  Determine which point $\rr^{\times} \in \{\rr',\rr''\}$ is $\0$-embracing
  equivalent to $Q$ with Lemma~\ref{lem:dimreduct} 
  and let $C^\times$ be the corresponding subset of $C$\;
  \Return{$\lt(C \setminus C^{\times}\rt) \cup Q$ \textup{pruned with
      Lemma~\ref{lem:findminembr}}}\;
  \caption[Computing $\0$-embracing $(\lceil d/2\rceil +1)$-colorful
  choices.]{Simple Approximation}
  \label{alg:simapx}
\end{alg}

\begin{theorem}\label{thm:simpledimreduct}
  Let $C_1, \ldots, C_m\subset \R^d$ be
  $m\leq d$ color classes such that $C_i$ is a $\0$-embracing set
  of size $\Oh{d}$, for $i \in [m]$. On input $C_1,\dots,C_m$ and given the
  coefficients of the convex combination of the origin for each set $C_i$,
  Algorithm~\ref{alg:simapx} computes a $\0$-embracing $\max\lt(d-m+2,
  \lt\lceil\frac{d+1}{2}\rt\rceil\rt)$-colorful choice in $\Oh{d^5}$ time.
  In particular, for $m=\lt\lfloor d / 2 \rt\rfloor + 1$, the algorithm 
  computes a $(\lceil d/2\rceil +1)$-colorful choice.
\end{theorem}
\begin{prf}
  The correctness of Algorithm~\ref{alg:simapx} is a direct consequence of
  Lemmas~\ref{lem:dimreduct} and~\ref{lem:representatives}. It remains to
  analyze the running time. In each step of the recursion except for the 
  last one, we prune two times a set of size $\Oh{d}$ with
  Lemma~\ref{lem:findminembr}. This needs $\Oh{d^4}$ time. Furthermore, by
  Lemma~\ref{lem:representatives}, computing two representative points 
  also takes $\Oh{d^4}$ time. Finally, given the set $Q$, determining which
  representative point is $\0$-embracing equivalent to $Q$ takes also 
  $\Oh{d^4}$ by Lemma~\ref{lem:dimreduct} and using the fact that
  the recursively computed solution is minimally embracing. Thus, we 
  need $\Oh{d^4}$ time per step of the recursion and there are 
  $\Oh{d}$ recursion steps in total. The total running time is 
  $\Oh{d^5}$.
\end{prf}

Although nontrivial, the fact that we can take in polynomial time 
half of the points from each color class to construct a $\0$-embracing
$\lt(\lt\lceil d/2\rt\rceil+1\rt)$-colorful choice may not be too 
surprising.  In the remainder of this section, we present a 
generalization of Algorithm~\ref{alg:simapx} that computes 
$\0$-embracing $\lceil \eps d\rceil$-colorful choices in 
polynomial time for any fixed $\eps > 0$. The improved approximation
guarantee is achieved by repeatedly replacing subsets of $C$ with
Lemmas~\ref{lem:dimreduct} and~\ref{lem:representatives} in each 
step of the recursion. To still ensure polynomial running time, we 
reduce the dimensionality by a constant fraction in each step of 
the recursion. Additionally, we slightly worsen the desired 
approximation guarantee in each level of the recursion, i.e., if the
current recursion level is $j$ and the dimensionality is $d'$, then 
we do not compute an $\lt\lceil \eps d'\rt\rceil$-colorful
choice, but a $\lt\lceil (1-\eps/2)^{-j/2} \eps d'\rt\rceil$-colorful
choice. As we will see, this additional ``slack'' in the approximation 
guarantee limits the recursion depth to a constant depending only on 
$\eps$. 

In more detail, let $C_1,\dots,C_{d+1}\subset\R^d$ be $d+1$ sets that each
embrace the origin, and let $\eps > 0$ be a parameter. We want to compute 
an $\lt\lceil \eps d\rt\rceil$-colorful choice that embraces the origin. 
Set
\[
d_j = \lt\lceil \lt(1-\frac{\eps}{2}\rt)^j d \rt\rceil \text{~and }
k_j = \lt\lceil \eps \lt(1-\frac{\eps}{2}\rt)^{j/2}
d\rt\rceil,
\]
for $j \in \N$. The sequence $d_j$ controls the dimension reduction
argument with Lemmas~\ref{lem:dimreduct} and~\ref{lem:representatives}, 
i.e., in the $j$th recursion level, the dimensionality of the input 
will be $d_j$. The sequence $k_j$ defines the approximation guarantee 
in the $j$th recursion level.  Note that $d_0 = d$ and 
$k_0 = \lt\lceil \eps d \rt\rceil$. Assume now we are in recursion 
level $j$. That is, the input consists of $d_j+1$
color classes $C_1,\dots,C_{d_j+1} \subset \R^{d_j}$ that each embrace the
origin together with the coefficients of their convex combinations 
of the origin.
We want to compute a $\0$-embracing $k_j$-colorful choice.
As in the previous algorithm, we begin by computing a minimal 
$\0$-embracing subset $C$ of $C_1$ with Lemma~\ref{lem:findminembr}. 
If $k_j \geq d_j + 1$, then $C$ is already a valid approximation. 
Otherwise, we iteratively transform $C$ into a $k_j$-colorful choice. 
For this, we repeatedly replace subsets
of $C$ with points from $C_2 \cup \dots \cup C_{d_j + 1}$
until it contains at most $k_j$ points from each color. This 
is done as follows. Set $m = d_j - d_{j+1} + 1$. 
In the general situation, $C$ contains points from several
color classes, and we partition $C$ into sets $D_1,\dots,D_m$ by
distributing the points from each color in $C$ equally among 
these $m$ sets. Then,
we compute representative points $\rr_1,\dots,\rr_m$ for this partition. 
Let
$C^\star_1,\dots,C^\star_{d_{j+1}+1} \in \lt\{C_2,\dots,C_{d_j+1}\rt\}$ be
$d_{j+1}+1$ color classes, where we discuss shortly how they are chosen.
We recursively compute a $k_{j+1}$-colorful choice $Q$ for
$C^\star_1,\dots,C^\star_{d_{j+1}+1}$ that embraces the
origin when projected on $U = \lsp(\rr_1,\dots,\rr_m)^\perp$. 
Note that $\dim U = d_j-(m-1) = d_{j+1}$
and hence the dimensionality of the input in recursion level 
$j+1$ is $d_{j+1}$, as desired. Then, by Lemmas~\ref{lem:dimreduct}
and~\ref{lem:representatives}, at least one representative 
point $\rr_{i^\times}$ and
hence at least one of the sets $D_{i^\times}$ is
$\0$-embracing equivalent to $Q$. We set $C$ to $\lt(C \setminus
D_{i^\times}\rt) \cup Q$ and prune it with Lemma~\ref{lem:findminembr}. 
We repeat these steps until $C$ is a $k_j$-colorful choice. 

To ensure progress, $m$ should be smaller than $k_j$ so that 
$D_{i^\times}$ is guaranteed to contain a point from each 
color that appears more than $k_j$ times
in $C$. Furthermore, $Q$ should not contain points with colors that appear
``often'' in $C$. We call a color class $C_i$ \emph{light} with 
respect to $C$ if $|C \cap C_i| \leq k_j - k_{j+1}$, and 
\emph{heavy}, otherwise. For the recursion, we use only light 
color classes. A $k_{j+1}$-colorful choice with light colors can be 
added safely to $C$ without increasing any color over the
threshold $k_j$. In particular, since we start with $C=C_1$ and 
use only light color classes, no other color class can ever occur 
more than $k_j$ times in $C$ and hence we are finished once the 
number of points from $C_1$ is at most $k_j$.
Please refer to Algorithm~\ref{alg:mainapx} for details.

\begin{alg}
  \SetKwFunction{Recurse}{recurse}
  \KwIn{recursion depth $j\in\N_0$ (initially $0$), original 
  dimension $d \in \N$, approximation parameter $\eps > 0$, $d_j+1$ sets
  $C_1,\ldots,C_{d_j+1} \subset \R^{d_j}$ that each embrace the origin, 
  and for each $C_i$ the coefficients of the convex
  combination of $\0$ with the points in $C_i$
  }
  \KwOut{minimally $\0$-embracing $k_j$-colorful choice}
  $k_j \gets \lt\lceil \eps \lt(1-\frac{\eps}{2}\rt)^{j/2} d \rt\rceil$\;
  $d_{j+1} \gets \lt\lceil \lt(1-\frac{\eps}{2}\rt)^{j+1} d \rt\rceil$\;
  $m \gets d_j - d_{j+1} + 1$\;
  $C \gets $ prune $C_1$ with
  Lemma~\ref{lem:findminembr}\;\label{alg:mainapx:initprune}
  \While{$|C \cap C_1| > k_j$}{\label{alg:mainapx:while}
    $D_1,\dots, D_m \gets $ partition of $C$ s.t.\ the
    points from each color class are evenly 
    distributed\;\label{alg:mainapx:part}
    Compute representative points $\rr_1,\dots,\rr_m$ for
        $D_1,\dots,D_m$ with
        Lemma~\ref{lem:representatives}\;\label{alg:mainapx:representatives}
    Find $d_{j+1} + 1$ light color classes 
    $C^\star_1,\dots,C^\star_{d_{j+1} + 1} \in 
    \lt\{C_2,\dots,C_{d_j+1}\rt\}$\;
    $\DOWN{C}_1, \dots, \DOWN{C}_{d_{j+1}+1} \gets$ orthogonal projection of
    $C^\star_1,\dots,C^\star_{d_{j+1}+1}$ onto
        $\lsp(\rr_1,\dots,\rr_m)^\perp$\;\label{alg:mainapx:proj}
    $\DOWN{Q} \gets $\Recurse{$j+1$, $d$, $\eps$, $\DOWN{C}_1, \dots,
        \DOWN{C}_{d_{j+1}+1}$}\;\label{alg:mainapx:rec}
    $Q \gets $ replace projected points in $\DOWN{Q}$ by original 
    points from
    $\bigcup_{i=1}^{d_{j+1}+1} C^\star_i$\;
    Determine which point $\rr_{i^\times} \in \{\rr_1,\dots,\rr_m\}$ 
    is $\0$-embracing equivalent to $Q$ with Lemma~\ref{lem:dimreduct}\;
    $C \gets \lt(C \setminus D_{i^\times}\rt) \cup Q $ pruned with
        Lemma~\ref{lem:findminembr}\;
  }
  \Return{$C$}\;\label{alg:mainapx:return}
  \caption[Computing $\0$-embracing $\lt\lceil \eps d\rt\rceil$-colorful
  choices.]{$\lt\lceil \eps d\rt\rceil$-Approximation}
  \label{alg:mainapx}
\end{alg}

The next lemma states that for $\eps$ fixed, the number of necessary 
recursions before a trivial approximation with 
Lemma~\ref{lem:findminembr} suffices is constant.

\begin{lemma}\label{lem:recdepth}
  For any $\eps = \Om{d^{-1/4}}$ there exists a 
  $j = \Th{ \eps^{-1} \ln \eps^{-1}}$ such that $k_j \geq d_j+1$.
\end{lemma}
\begin{prf}
Replacing $d_j$ with its definition, we obtain
\begin{equation}\label{lem:recdepth:dj}
  d_j + 1 = \left\lceil \lt(1-\frac{\eps}{2}\rt)^j d \right\rceil + 1
  \leq \lt(1-\frac{\eps}{2}\rt)^j d + 2.
\end{equation}
Using $\ln\lt(1- \frac{\eps}{2}\rt) \geq  -\eps$ if $\eps\leq 1$, we 
have for
$j \leq \frac{1}{\eps} \ln d$,
\begin{equation}\label{lem:recdepth:geq1}
\lt(1-\frac{\eps}{2}\rt)^j d \geq e^{-\eps j}d \geq 1.
\end{equation}
Furthermore, using that $\ln\lt(1-\frac{\eps}{2}\rt) \leq
-\frac{\eps}{2}$, we have for $j \geq \frac{4}{\eps} \ln \frac{3}{\eps}$
\begin{equation}\label{lem:recdepth:eps}
  3 \lt(1 - \frac{\eps}{2}\rt)^{j/2} \leq 
  3 e^{-\eps j/4} \leq \eps.
\end{equation}
Combining~\eq{lem:recdepth:geq1} and~\eq{lem:recdepth:eps}
with~\eq{lem:recdepth:dj}, we get
\[
  d_j + 1
  \leq 3 \lt(1 - \frac{\eps}{2}\rt)^j d
  \leq \eps \lt(1 - \frac{\eps}{2}\rt)^{j/2} d
  \leq \lt\lceil\eps \lt(1 - \frac{\eps}{2}\rt)^{j/2} d\rt\rceil
  = k_j.
\]
For $d = \Om{\eps^{-1/4}}$, there is a $j$ with
$\frac{4}{\eps}\ln \frac{3}{\eps} \leq j \leq \frac{1}{\eps} \ln d$. 
The claim follows.
\end{prf}

Next, we show that if the recursion depth is not too large, then we can
always find enough light color classes.

\begin{lemma}\label{lem:light}
  Let $j \in \N$ and let $C_1,\dots,C_{d_j+1}\subset \R^{d_j}$ be 
  $d_j+1$ color classes.
  Furthermore, let $C \subseteq \bigcup_{i=1}^{d_j+1} C_i$ be a set 
  of size at most $d_j+1$. For all $j=\Oh{\eps^{-1} \ln (\eps^3 d)}$, 
  there exist $d_{j+1}+1$ light color classes with respect to $C$.
\end{lemma}
\begin{prf}
We recall that a color class $C_i$, $i \in [d_j+1]$, is light with 
respect to $C$ if $|C \cap C_i| \leq k_j - k_{j+1}$. Then, the 
number of heavy color classes $h$ is bounded by
\begin{equation}\label{lem:light:heavy}
  h \leq \lt\lceil\frac{d_j + 1}{k_j - k_{j+1}}\rt\rceil \leq
  \frac{2 d_j}{k_j - k_{j+1}} + 1,
\end{equation}
since $d_j \geq 1$ for all $j \in \N$. We can bound the denominator 
as follows
\begin{multline}\label{lem:light:denom}
    k_j - k_{j+1}
    = \lt\lceil \eps \lt(1-\frac{\eps}{2}\rt)^{j/2} d\rt\rceil
        - \lt\lceil \eps \lt(1-\frac{\eps}{2}\rt)^{(j+1)/2} d\rt\rceil
    \geq \eps \lt(1-\frac{\eps}{2}\rt)^{j/2} d
        - \eps \lt(1-\frac{\eps}{2}\rt)^{(j+1)/2} d - 1
    \\
    = \eps \lt(1-\frac{\eps}{2}\rt)^{j/2} d 
    \lt( 1 - \sqrt{1-\frac{\eps}{2}}\rt) - 1
    \geq \frac{\eps^2}{4} \lt(1-\frac{\eps}{2}\rt)^{j/2} d - 1,
\end{multline}
where we apply $1 - \sqrt{1 - \frac{\eps}{2}} \geq \frac{\eps}{4}$ in 
the last
inequality.  Using that $\ln\lt(1-\frac{\eps}{2}\rt) \geq -\eps$ if 
$\eps \leq 1$, we have for $j \leq \frac{2}{\eps} \ln \frac{\eps^2 d}{8}$
\begin{equation}\label{lem:light:denomsim}
1 \leq 
\frac{\eps^{2}}{8} e^{-\eps j/2} d \leq
\frac{\eps^2}{8} \lt(1-\frac{\eps}{2}\rt)^{j/2} d
\end{equation}
and hence~\eq{lem:light:denom} can be simplified to
\begin{equation}\label{lem:light:denomfinal}
    k_j - k_{j+1} \geq \frac{\eps^2}{8} \lt(1-\frac{\eps}{2}\rt)^{j/2} d.
\end{equation}
Plugging (\ref{lem:light:denomfinal}) into (\ref{lem:light:heavy}) and
using~\eq{lem:light:denomsim},
we obtain
\begin{equation*}\label{lem:light:heavyfinal}
  h \leq \frac{2 \lt\lceil \lt(1-\frac{\eps}{2}\rt)^j d\rt\rceil}
  {\frac{\eps^2}{8}
      \lt(1-\frac{\eps}{2}\rt)^{j/2} d} + 1
  \leq \frac{2 \lt(1-\frac{\eps}{2}\rt)^j d}{\frac{\eps^2}{8}
      \lt(1-\frac{\eps}{2}\rt)^{j/2} d} + 3
  = \frac{16}{\eps^2} \lt(1-\frac{\eps}{2}\rt)^{j/2} + 3.
\end{equation*}
Then, the number $\ell$ of light color classes is at least
\begin{multline}\label{lem:light:light1}
    \ell = d_j + 1 - h
    \geq \lt\lceil \lt(1-\frac{\eps}{2}\rt)^j d \rt\rceil - 
    \frac{16}{\eps^2}
        \lt(1-\frac{\eps}{2}\rt)^{j/2} - 2
    \\
    \geq \lt(1-\frac{\eps}{2}\rt)^j d \lt(1 - \frac{16}{\eps^2
          \lt(1-\frac{\eps}{2}\rt)^{j/2} d} -
    \frac{2}{\lt(1-\frac{\eps}{2}\rt)^j d} \rt).
\end{multline}
For $j \leq \frac{2}{\eps}\ln \frac{\eps^3 d}{128}$, using
$\ln \lt(1-\frac{\eps}{2}\rt) \geq  -\eps$ if $\eps \leq 1$, we have
\begin{equation*}
    \frac{16}{\eps^2 \lt(1-\frac{\eps}{2}\rt)^{j/2} d} +
    \frac{2}{\lt(1-\frac{\eps}{2}\rt)^j d} 
 \leq    \frac{16}{\eps^2 e^{-\eps j/2} d} +
    \frac{2}{e^{-\eps j/2} d}
    \leq \frac{\eps}{8} + \frac{\eps}{8}  
    \leq \frac{\eps}{4}
\end{equation*}
and thus~\eq{lem:light:light1} implies
\begin{equation}\label{lem:light:lightinterm}
  \ell \geq \lt(1-\frac{\eps}{4}\rt)\lt(1-\frac{\eps}{2}\rt)^j d.
\end{equation}
For $j \leq \frac{2}{\eps}\ln \frac{\eps d}{2}$, using
$\ln \lt(1-\frac{\eps}{2}\rt) \geq  -\eps$ if $\eps \leq 1$, we can bound
\begin{equation}\label{lem:light:epshalf}
  \frac{\eps}{4} \lt(1-\frac{\eps}{2}\rt)^j d 
  \geq \frac{\eps}{4} e^{-\eps j/2} d \geq 2.
\end{equation}
Combining~\eq{lem:light:epshalf} with (\ref{lem:light:lightinterm}), we get
\[
  \ell
  \geq \lt(1-\frac{\eps}{2}\rt)^{j+1} d
      + \frac{\eps}{4} \lt(1 - \frac{\eps}{2}\rt)^j d
  \geq \lt(1-\frac{\eps}{2}\rt)^{j+1} d + 2
  \geq \lt\lceil \lt(1-\frac{\eps}{2}\rt)^{j+1} d \rt\rceil + 1
  = d_{j+1} + 1.
\]
Thus, for $j = \Oh{\eps^{-1} \ln (\eps^3 d)}$, there are at least $d_{j+1}
+1$ light color classes with respect to $C$.
\end{prf}

Before we finally prove correctness, we show if the recursion depth 
$j$ is not too large, then each set of the partition of $C$ contains 
at least one point from $C_1$ until $C$ is a $k_j$-colorful choice. 
This implies that each iteration of
the while-loop decreases the amount of points from $C_1$ in $C$.

\begin{lemma}\label{lem:while}
  For all $j = \Oh{\eps^{-1} \ln (\eps d)}$, we have 
  $m = d_j - d_{j+1} + 1 \leq k_j + 1$.
\end{lemma}
\begin{prf}
  First, we upper bound $m$ as follows:
\begin{equation}\label{lem:while:k}
  \begin{split}
    m = d_j - d_{j+1} + 1
    & = \lt\lceil \lt(1-\frac{\eps}{2}\rt)^j d \rt\rceil
        - \lt\lceil \lt(1-\frac{\eps}{2}\rt)^{j+1} d \rt\rceil + 1
    \\
    & \leq \lt(1-\frac{\eps}{2}\rt)^j d - 
    \lt(1-\frac{\eps}{2}\rt)^{j+1} d + 2
    = \frac{\eps}{2} \lt(1-\frac{\eps}{2}\rt)^j d + 2.
  \end{split}
\end{equation}
For $j \leq  \frac{2}{\eps} \ln \frac{\eps d}{2}$, 
with $\ln\lt(1-\frac{\eps}{2}\rt) \geq -\eps$ if 
$\eps \leq 1$,
we obtain $\frac{\eps}{2}
\lt(1-\frac{\eps}{2}\rt)^j d \geq \frac{\eps}{2} e^{-\eps j/2}d \geq 1$. 
Using this in (\ref{lem:while:k}), we get
\begin{equation*}
    m
  \leq \eps \lt(1-\frac{\eps}{2}\rt)^j d + 1
  \leq \lt\lceil \eps \lt(1-\frac{\eps}{2}\rt)^j d \rt\rceil + 1 = k_j + 1,
\end{equation*}
as desired.
\end{prf}

\begin{theorem}
\label{thm:bapx}
  Let $C_1, \ldots, C_{d+1}\subset \R^d$ be
  $d+1$ sets such that $C_i$ is a $\0$-embracing set
  of size $\Oh{d}$, for $i \in [d+1]$, and let $\eps = \Om{d^{-1/4}}$ be a
  parameter. On input $0$, $d$, $\eps$, $C_1,\dots,C_{d+1}$, and given the
  coefficients of the convex combination of the origin with the points 
  in $C_i$, for $i \in [d+1]$,
  Algorithm~\ref{alg:mainapx} computes a $\0$-embracing 
  $\lceil \eps d\rceil$-colorful
  choice in $d^{\Oh{\eps^{-1} \ln \eps^{-1}}}$ time.
\end{theorem}

\begin{prf}
We begin by showing that if the algorithm enters the while 
loop in recursion level $j$, it is always possible to find $d_{j+1}
+ 1$ light color classes and that the projections
$\DOWN{C}_1,\dots,\DOWN{C}_{d_{j+1}+1}$ of these color
classes are $\0$-embracing subsets of 
$\R^{d_{j+1}}$~(Line~\ref{alg:mainapx:proj}). In other words, we show
that recursion is possible if $C$ is not a $k_j$-colorful choice.
Assume now the algorithm enters the while loop in recursion level $j$.
Then, $C$ is a minimally $\0$-embracing subset of 
$C_1 \subset \R^{d_j}$ and has size at least $k_j+1$. In
Line~\ref{alg:mainapx:part}, we partition $C$ into $m$ sets 
$D_1,\dots,D_m$ by distributing the points from each color class equally. 
By Lemma~\ref{lem:while}, we have $m\leq k_j + 1$, 
for $j = \Oh{\eps^{-1} \ln (\eps d)}$,
and hence each set $D_i$ is nonempty. Thus,
the algorithm from Lemma~\ref{lem:representatives} can be applied in
Line~\ref{alg:mainapx:representatives}
to compute the representative points $\rr_1,\dots,\rr_m$. Moreover $\dim
\lsp\lt(\rr_1,\dots,\rr_m\rt) = m-1$ by Lemma~\ref{lem:representatives} and
Lemma~\ref{lem:minembr}.
Thus, $\dim \lsp\lt(\rr_1,\dots,\rr_m\rt)^\perp = d - m + 1 = d_{j+1}$.
Now, Lemma~\ref{lem:light} guarantees that we can always find $d_{j+1}+1$ 
light color classes $C^\star_1,\dots,C^\star_{d_{j+1}+1}$, 
if $j = \Oh{\eps^{-1} \ln \eps^3 d}$. Because each color 
class $C^\star_i$, $i \in [d_{j+1}+1]$, is
$\0$-embracing, so are their orthogonal projections onto
$\lsp(\rr_1,\dots,\rr_k)^T$. Thus, recursion is possible 
if $j = \Oh{\eps^{-1} \ln \eps^3 d}$. By Lemma~\ref{lem:recdepth}, 
the recursion depth is limited to
$\Th{\eps^{-1} \ln \eps^{-1}}$, since then pruning $C_1$ with
Lemma~\ref{lem:findminembr} in Line~\ref{alg:mainapx:initprune} is already a
$\0$-embracing $k_j$-colorful choice. In this case, the while loop is never
executed. We conclude that for $\eps = \Om{d^{-1/4}}$, recursion is always
possible as long as $C$ is not a $k_j$-colorful choice.

Next, we prove that the algorithm computes in recursion level $j$ a
$\0$-embracing $k_j$-colorful choice. As discussed above, the
recursion terminates after $\Oh{\eps^{-1} \ln \eps^{-1}}$ steps when
the set $C$ from Line~\ref{alg:mainapx:initprune} is already a 
$\0$-embracing $k_j$-colorful choice. If $C$ is not already a 
valid approximation, the while loop is executed. In each 
iteration of the while loop, $C$ is partitioned into $m$ sets
$D_1,\dots,D_m$ by distributing the points from each color 
equally among the $D_i$. By Lemma~\ref{lem:while}, 
$m\leq k_j + 1$ for $j = \Oh{\eps^{-1} \ln \eps d}$
and hence each set $D_i$, $i \in [m]$, contains at least one 
point from $C_1$. Applying Lemmas~\ref{lem:dimreduct} 
and~\ref{lem:representatives}, one of these
sets, say $D_{i^\times}$, is replaced in $C$  by a recursively computed
$k_{j+1}$-colorful choice $Q$
that is $\0$-embracing when projected onto $\lsp(\rr_1,\dots,\rr_m)^\perp$.
Since we use in the recursion only light color classes with
respect to $C$, and since $C_1$ is not a light color class, 
each iteration of the while loop strictly decreases the number of 
points from $C_1$ in $C$.  Moreover, because $Q$ contains 
only points from light color classes and since it
is a $k_{j+1}$-colorful choice, $\lt(C \setminus D_{i^\times}\rt) \cup Q$
contains at most $k_j$ points from the color classes $C_2,\dots,C_{d_j+1}$.
Thus, after $\Oh{d}$ iterations, $C$ is a $\0$-embracing 
$k_j$-colorful choice.

It remains to analyze the running time. The initial computation of $C$ in
Line~\ref{alg:mainapx:initprune} and each iteration of the 
while loop except for the recursive call takes $\Oh{d^4}$ time. 
Since the while loop is executed $\Oh{d}$
times and since the recursion depth is bounded by $\Oh{\eps^{-1} \ln
\eps^{-1}}$, the total running time of Algorithm~\ref{alg:mainapx} is 
$d^{\Oh{\eps^{-1} \ln \eps^{-1}}}$.
\end{prf}

\subsection{Applications}
\label{sec:kcol:applications}

As discussed in the introduction, the main motivation for $k$-colorful
choices is their application in polynomial-time reductions to \CCP. We
begin by presenting the proofs whose interpretation as algorithms 
results in the polynomial reductions. Then, we give precise 
bounds on the quality of the
obtained approximation algorithms for \Centerpoint, \Tverberg, and
\ColKirchberger when having access to an algorithm that on input $d+1$
color classes $C_1,\dots,C_{d+1}$, each $\0$-embracing and of size at 
most $d+1$, computes a $\0$-embracing $k(d)$-colorful choice in time $W(d)$.

\begin{theorem}[{Centerpoint 
theorem~\cite[Theorem~1]{Rado1946}}]\label{thm:centerpoint}
Let $P \subset \R^d$ be a point set. Then, there exists a point $\qq \in
\R^d$ such that for any halfspace $h^-$ with $\qq \in h^-$, we 
have $|P \cap h^-| \geq \left\lceil\frac{|P|}{d+1}\right\rceil$. \qed
\end{theorem}

Teng~\cite[Theorem~8.4]{Teng1991} showed that given a point set 
$P \in \R^d$ and a candidate centerpoint $\qq \in \R^d$, it is 
\coNP-complete to decide whether $\qq$ is
a centerpoint of $P$, if $d$ is part of the input. 
For $d=1$, a centerpoint is equivalent to a median of a set of 
numbers and hence can be computed in $\Oh{|P|}$ 
time~\cite{BlumFlPrRiTa1973}. Jadhav and Mukhopadhyay~\cite{JadhavMu1994} 
showed that linear time is sufficient even in two dimensions. For 
$d\geq 3$ fixed, the best known algorithm is by Chan~\cite{Chan2004} 
who showed how to compute a point with maximum Tukey depth,
a stronger notion than being a centerpoint, in expected time 
$\Oh{n^{d-1}}$.

Although it is in general \coNP-complete to verify centerpoints, Tverberg
partitions serve as
polynomial-time checkable certificates for a subset of centerpoints.
In recent years, this property has been exploited algorithmically to derive
efficient approximation algorithms for
centerpoints~\cite{MulzerWe2013,MillerSh2010}.
The existence of Tverberg points is guaranteed by Tverberg's
theorem~\cite{Tverberg1966}.

\begin{theorem}[Tverberg's theorem~\cite{Tverberg1966}]\label{thm:tverberg}
  Let $P \subset \R^d$ be a point set of size $n$. Then, there 
  always exists a
  Tverberg $\left\lceil\frac{|P|}{d+1}\right\rceil$-partition for $P$.
  Equivalently, let $P$ be of size $(m-1)(d+1)+1$, with $m \in \N$. 
  Then, there exists a Tverberg $m$-partition for $P$.
\end{theorem}

While Tverberg's first proof is quite involved, several
simplified subsequent 
proofs~\cite{Tverberg1981,TverbergVr1993,Sarkaria1992,Roudneff2001} 
have been published.
Here, we present Sarkaria's proof~\cite{Sarkaria1992} with further
simplifications by \Barany and Onn~\cite{BaranyOn1997} and Arocha
\etal~\cite{ArochaBaBrFaMo2009}.
The main tool is the next lemma that establishes a correspondence 
between the intersection of convex hulls of low-dimensional point sets
and the embrace of the origin of certain high-dimensional
point sets. It was extracted from Sarkaria's proof by Arocha
\etal~\cite{ArochaBaBrFaMo2009}. In the following, we denote with 
$\otimes$ the \emph{tensor product} that maps two points 
$\pp \in \R^d$, $\qq \in \R^m$ to the point
\[
\pp \otimes \qq =
\begin{pmatrix}
  (\qq)_1 \pp \\
  (\qq)_2 \pp \\
  \vdots \\
  (\qq)_m
  \pp
\end{pmatrix} \in \R^{dm},
\]
where $(\qq)_i \pp$ denotes the vector $\pp$ scaled by the $i$th component
of $\qq$, for $i \in [m]$.
Then, $\otimes$ is bilinear, i.e., for all $\pp_1,\pp_2 \in
\R^d$, $\qq \in \R^m$, and $\alpha_1,\alpha_2 \in \R$, we have
\[
\lt(\alpha_1 \pp_1 + \alpha_2 \pp_2\rt) \otimes \qq
= \alpha_1 \lt(\pp_1 \otimes \qq\rt) + \alpha_2 \lt(\pp_2 \otimes \qq\rt)
\]
and similarly, for all $\pp \in \R^d$, $\qq_1,\qq_2 \in \R^m$, and
$\alpha_1,\alpha_2 \in \R$, we have
\[
\pp \otimes \lt(\alpha_1 \qq_1 + \alpha_2 \qq_2\rt)
= \alpha_1 \lt(\pp \otimes \qq_1\rt) + \alpha_2 \lt(\pp \otimes \qq_2\rt).
\]

\begin{lemma}[Sarkaria's 
lemma~\cite{Sarkaria1992},~{\cite[Lemma~2]{ArochaBaBrFaMo2009}}]
\label{lem:sarkaria}
Let $P_1,\dots,P_m \subset \R^d$ be $m$ point sets and 
let $\qq_1,\dots,\qq_{m}
\subset \R^{m-1}$ be $m$ vectors with $\qq_i = \e_i$ for $i \in [m-1]$ and
$\qq_m = -\1$. For $i \in [m]$, we define
\[
  \UP{P}_i = \set{\TwoRowVec{\pp}{1} \otimes \qq_i
  \midd \pp \in P_i}\subset \R^{(d+1) (m-1)}.
\]
Then, the intersection of the convex hulls $\bigcap_{i=1}^m \convv{P_i}$ 
is nonempty if and only if $\;\bigcup_{i=1}^m \UP{P}_i$ embraces the origin.
\end{lemma}
\begin{prf}
  Assume there is a point $\pp^\star \in \bigcap_{i=1}^m \convv{P_i}$.
  There exist coefficients
  $\lambda_{i,\pp} \in \Rp$ that sum to $1$ such that
  $\pp^\star = \sum_{\pp \in P_i} \lambda_{i,\pp}\pp$.
  Consider the points $\up{\pp}_i \in \convv{\UP{P}_i}$, $i \in [m]$, 
  that we obtain by using the same convex coefficients for the points 
  in $\UP{P}_i$, i.e., set
\[
  \up{\pp}_{i}
  = \sum_{\pp \in P_i} \lambda_{i,\pp}
          \lt(\TwoRowVec{\pp}{1} \otimes \qq_i\rt) \in \convv{\UP{P}_i}.
\]
We claim that $\sum_{i=1}^{m} \up{\pp}_i = \0$ and thus
$\0 \in \convv{\bigcup_{i=1}^{m} \UP{P}_i}$. Indeed, we have
\begin{multline*}
  \sum_{i=1}^{m} \up{\pp}_i = 
  \sum_{i=1}^{m} \sum_{\pp \in P_i} \lambda_{i,\pp}
  \left( \TwoRowVec{\pp}{1} \otimes \qq_i \right)
  \\
  = \sum_{i=1}^{m} \left( \sum_{\pp \in P_i} \lambda_{i,\pp}
  \TwoRowVec{\pp}{1} \right) \otimes \qq_i
  = \sum_{i=1}^{m} \TwoRowVec{\pp^\star}{1} \otimes \qq_i
  \\
  = \TwoRowVec{\pp^\star}{1} \otimes \left( \sum_{i=1}^{m} \qq_i \right)
  = \TwoRowVec{\pp^\star}{1} \otimes \0 = \0,
\end{multline*}
using the bilinearity of  $\otimes$.

Assume now that $\bigcup_{i=1}^m \UP{P}_i$ embraces the origin. 
We want to show that $\bigcap_{i=1}^m \convv{P_i}$ is nonempty. 
Then, we can express the origin as a convex combination $\sum_{i=1}^{m}
\sum_{\up{\pp} \in \UP{P}_i} \lambda_{i,\up{\pp}} \up{\pp}$ with
$\lambda_{i,\up{\pp}} \in\Rp$ for $i \in
[m]$ and $\up{\pp} \in \UP{P}_i$, and $\sum_{i=1}^{m} \sum_{\up{\pp} \in
\UP{P}_i} \lambda_{i,\up{\pp}} = 1$. Hence, we have
\[
\0 = \sum_{i=1}^{m} \sum_{\up{\pp} \in \UP{P}_i} \lambda_{i,\up{\pp}}
\left( \TwoRowVec{\pp}{1} \otimes \qq_i \right)
 = \sum_{i=1}^{m} \left( \sum_{\up{\pp} \in \UP{P}_i} \lambda_{i,\up{\pp}}
 \TwoRowVec{\pp}{1} \right) \otimes \qq_i,
\]
again using the bilinearity of $\otimes$.
By the choice of $\qq_1,\dots,\qq_m$, there is (up to multiplication 
with a scalar) exactly one linear dependency: $\0 = \sum_{i=1}^{m} \qq_i$.
Thus,
\[
\sum_{\up{\pp} \in \UP{P}_1} \lambda_{1,\up{\pp}} \TwoRowVec{\pp}{1}
= \dots =
\sum_{\up{\pp} \in \UP{P}_m} \lambda_{m,\up{\pp}} \TwoRowVec{\pp}{1} =
\TwoRowVec{\pp^\star}{c},
\]
where $\pp^\star \in \R^d$ and $c \in \R$. In particular, the last equality
implies that
\[
\sum_{\up{\pp} \in \UP{P}_1} \lambda_{1,\up{\pp}}
= \dots =
\sum_{\up{\pp} \in \UP{P}_m} \lambda_{m,\up{\pp}}
= c.
\]
Now, as for all $i \in [m]$ and $\up{\pp} \in \UP{P}_i$, the coefficient
$\lambda_{i,\up{\pp}}$ is nonnegative and as the sum $\sum_{i \in [m]}
\sum_{\up{\pp} \in \UP{P}_i} \lambda_{i,\up{\pp}}$ is $1$, we must have 
$c = 1/m \in (0,1]$. Hence,
the point $m \pp^\star$ is common to all convex
hulls $\convv{P_1}$, $\ldots$, $\convv{P_m}$.
\end{prf}

Please refer to Figure~\ref{fig:sarkaria} for an example of Sarkaria's 
lifting argument.  Little work is now left to obtain Tverberg's theorem 
from Lemma~\ref{lem:sarkaria} and the colorful \Caratheodory theorem.
\begin{figure}[htbp]
  \begin{center}
    \includegraphics{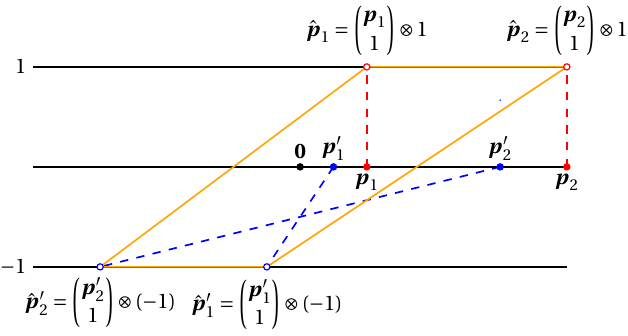}
  \end{center}
  \caption[Example of Sarkaria's Lemma.]{An example of Sarkaria's lemma for
  $d=1$ and $m=2$. The set $P_1$
  consists of the red points and the set $P_2$ consists of the blue points.
  Since the convex hulls of $P_1$ and $P_2$ intersect, the lifted points 
  embrace the origin.}
  \label{fig:sarkaria}
\end{figure}

\begin{prf}[Proof of Theorem~\ref{thm:tverberg}]\label{thm:tverberg:proof}
Let $P = \set{\pp_1,\dots,\pp_n} \subset \R^d$ be a point set of size
$n=(d+1)(m-1)+1$ and let $P_1,\dots,P_m$ denote $m$ copies of $P$.
For each set $P_j \subset \R^d$, $j \in [m]$, we construct a $((d+1)
(m-1))$-dimensional set $\UP{P}_j$ as in Lemma~\ref{lem:sarkaria}, i.e.,
\[
  \UP{P}_j = \set{ \up{\pp}_{i,j} = 
  \TwoRowVec{\pp_i}{1} \otimes \qq_j \midd \pp_i \in P}
  \subset \R^{(d+1) (m-1)} = \R^{n-1}.
\]
For $i \in [n]$, we denote with 
$\UP{C}_i \subseteq \bigcup_{j=1}^m \UP{P}_j$
the set of points $\set{\up{\pp}_{i,j}  \midd j \in [m]}$ that 
correspond to $\pp_i \in P$,
and we color these points with color $i$. For $i \in [n]$, note that
Lemma~\ref{lem:sarkaria} applied to $m$ copies of the singleton set 
$\set{\pp_i}
\subseteq P$ guarantees that the color class $\UP{C}_i \in
\R^{n-1}$ embraces the origin. Hence, we have $n$ color classes
$\UP{C}_1,\dots,\UP{C}_n$ that embrace the origin in $\R^{n-1}$. Now,
by Theorem~\ref{thm:colcara}, there is a
colorful choice $\UP{C} = \set{\up{\cc}_1,\dots,\up{\cc}_n} \subseteq
\bigcup_{i=1}^n \UP{C}_i$ with $\up{\cc}_i \in \UP{C}_i$ that embraces the
origin, too. Because $\UP{C}$ embraces the origin,
Lemma~\ref{lem:sarkaria} guarantees that the convex hulls of the sets 
$T_j = \set{ \pp_i \in P \midd \up{\pp}_{i,j} \in \UP{C}}$, $j \in [m]$, 
have a point in common.
Moreover, since all points in $\bigcup_{j=1}^m \UP{P}_j$
that correspond to the same point in $P$ have the same
color, each point $\pp_i \in P$ appears in exactly one set 
$T_j$, $j \in [m]$.
Thus, $\mc{T} =\set{T_1,\dots, T_m}$ is a Tverberg $m$-partition of $P$.
\end{prf}

Even less effort is required to obtain the colorful Kirchberger theorem 
from Lemma~\ref{lem:sarkaria}. Let $A, B \subset\R^d$ be two point sets.
Kirchberger's theorem~\cite{Kirchberger1903} states that if for all 
subsets $C \subset A \cup B$ of size at most $d+2$,
the sets $\convv{A\cap C}$ and $\convv{B \cap C}$ have an empty 
intersection, then $\convv{A}$ and $\convv{B}$ have an empty intersection.
Arocha~\etal~\cite{ArochaBaBrFaMo2009} presented a generalization based 
on the colorful \Caratheodory theorem.\footnote{Actually, Arocha~\etal 
present an even stronger result (the ``very colorful Kirchberger
theorem''~\cite[Theorem~3]{ArochaBaBrFaMo2009}) using a generalization 
of the colorful \Caratheodory theorem. Here, we consider the weaker 
version that can be obtained from Theorem~\ref{thm:colcara}.}

\begin{theorem}[Colorful Kirchberger theorem~{\cite[special case of
Theorem~3]{ArochaBaBrFaMo2009}}]\label{thm:colkirchberger}
Let $C_1,\dots, C_n \subset \R^d$ be $n=(m-1)(d+1)+1$ pairwise disjoint 
color classes and let $\mc{T}_i = \set{T_{i,1},\dots,T_{i,m}}$ 
denote a Tverberg
  $m$-partition for $C_i$, where $i \in [n]$. Then, there exists a colorful
  choice $C$, $|C|=n$, such that the family of sets
\[
  \mc{T}_{C} = \left\{ C \cap \left( \bigcup_{i=1}^n T_{i,j} \right) 
  \midd j \in [m] \right\}
\]
  is a Tverberg $m$-partition for $C$.
\end{theorem}
\begin{prf}
  We lift each Tverberg partition to $\R^{n-1}$ as in 
  Lemma~\ref{lem:sarkaria}: for $i \in [n]$ and $j \in [m]$, we 
  denote with $\UP{T}_{i,j}$ the set
\[
  \UP{T}_{i,j} = \set{ \TwoRowVec{\pp}{1} \otimes \qq_j \midd 
  \pp \in T_{i,j}} \subset
  \R^{n-1}.
\]
By Lemma~\ref{lem:sarkaria} and since each set 
$\mc{T}_i$, $i \in [n]$, is a Tverberg
partition, the sets $\UP{C}_i = \bigcup_{j=1}^m \UP{T}_{i,j}$, $i \in [n]$,
embrace the origin. We color the points in $\UP{C}_i$ with color $i$. Since
there are $n$ color classes that embrace the origin in $n-1$ dimensions,
Theorem~\ref{thm:colcara} guarantees the existence of a colorful 
choice $\UP{C}$ that
embraces the origin. For $j \in[m]$, let $\UP{T}_{j}=\UP{C} \cap \left(
\bigcup_{i=1}^n \UP{T}_{i,j} \right)$ denote all points from a $j$th 
element in a Tverberg partition in $\UP{C}C$. Since 
$\UP{C}= \bigcup_{j=1}^m \UP{T}_{j}$ embraces
the origin, Lemma~\ref{lem:sarkaria} implies that the convex hulls of
the sets 
$T_j = \set{ \pp \in \bigcup_{i=1}^n P_i \midd \TwoRowVec{\pp}{1} 
\otimes \qq_j \in \UP{T}_j}$ have a nonempty intersection. 
Further, since for $j \in [m]$, the set
$\UP{T}_{j}$ is a subset of $\bigcup_{i=1}^n \UP{T}_{i,j}$, we have $T_j
\subset \left( \bigcup_{i=1}^n T_{i,j} \right)$. Moreover, since all points
that correspond to the Tverberg partition $\mc{T}_i$, $i \in [n]$, have
color $i$, exactly one of the sets $T_1,\dots,T_m$ contains a point 
from $C_i$. The colorful choice $C$ can be obtained by projecting 
$\UP{C}$ down to $\R^d$.
\end{prf}

We now give precise bounds on the quality and the running time of 
approximation algorithms obtained by
combining algorithms for $k$-colorful choices with the presented
reductions to \CCP. Unfortunately, the approximation
guarantee of Algorithm~\ref{alg:mainapx} is too weak to obtain a nontrivial
approximation algorithm for \Tverberg and therefore also for \Centerpoint. 
On the positive side, it leads to a nontrivial approximation algorithm for
\ColKirchberger.

In the following, let $\mc{A}$ be an algorithm that, when given $d+1$ 
color classes
$C_1,\dots,C_{d+1} \subset \R^d$, each embracing the origin and of size
$\Oh{d}$, and for each $C_i$ the coefficients of the convex combination 
of the origin, outputs a $\0$-embracing $k(d)$-colorful choice in 
$W(d)$ time, where $k, W: \N \rightarrow \N$ are arbitrary but fixed functions.

\begin{corollary}\label{cor:app:tverberg}
  Let $P \subset \R^d$ be a point set of size $n$ and let $\mc{A}$ 
  be as above.  Set
  \[
  \widetilde{m} = 
  \lt\lceil \frac{n}{(d+1)^2\big(k(n-1)-1\big)+d+1}\rt\rceil =
  \Om{\frac{n}{d^2k(n-1)}}.
  \]
  Then, a Tverberg $\widetilde{m}$-partition $\mc{T}$ of
  $P$ and a point $\pp \in \bigcap_{T \in \mc{T}}
  \conv(T)$ can be computed in total time $\Oh{(d^2 + m) n^2 + W(n-1)}$.
\end{corollary}
\begin{prf}
\newcommand{\kch}[1]{\widetilde{#1}}
Set $m = \lt\lceil n / (d+1)\rt\rceil$.
In the proof of Theorem~\ref{thm:tverberg},
we lift $m$ copies of $P$ with Lemma~\ref{lem:sarkaria} to $\R^{n-1}$. 
Lifting one
point needs $\Oh{dm} = \Oh{n}$ time and hence lifting all $m$ copies takes
$\Oh{mn^2}$ time. Then, each point $\pp_i \in \R^d$ from $P$
corresponds to a color class $C_i =
\lt\{\up{\pp}_{i,j} \midd j \in [m]\rt\} \subset \R^{n-1}$
of size $m$ and a $\0$-embracing colorful choice of $C_1,\dots,C_n$
corresponds to the Tverberg partition $\mc{T}=\{T_1,\dots,T_m\}$ that 
we obtain
by assigning $\pp_i \in P$ to $T_j$ if $\up{\pp}_{i,j} \in C$.
By construction of the color classes in the proof of
Theorem~\ref{thm:tverberg}, the barycenter of $C_i$ is the origin, for 
$i \in [n]$.
Since we know then for each color class the coefficients of the convex
combination of
the origin, we can apply $\mc{A}$ to
obtain a $\0$-embracing $k(n-1)$-colorful choice $\kch{C} \subseteq
\bigcup_{i=1}^n C_i$ together with the coefficients of the convex 
combination of
the origin with the points in $\kch{C}$.
Let
$\kch{T}=\lt\{\kch{T}_1,\dots,\kch{T}_m\rt\}$ be a family of
subsets of $P$ that we construct as before by assigning $\pp_i$ to 
$\kch{T}_j$ if
$\up{\pp}_{i,j} \in \kch{C}$. Here, $\kch{T}$ is a multiset, i.e., we allow
$\kch{T}_i = \kch{T}_j$ for $i\neq
j$. Since $\kch{C}$ embraces the origin, Lemma~\ref{lem:sarkaria}
guarantees that the intersection $\bigcap_{i=1}^m \convv{\kch{T}_i}$ is
nonempty. Moreover, because we know the coefficients of the convex 
combination of the origin with the points in $\kch{C}$, we can 
compute in $\Oh{dn}$ time a
point $\pp^\star \in \bigcap_{i=1}^m \convv{\kch{T}_i}$ together with the
coefficients of the convex combination of $\pp^\star$ with the points in
$\kch{T}_i$ for $i \in [m]$, as described in the proof of 
Lemma~\ref{lem:sarkaria}.

Now, we construct a Tverberg partition for $P$ out of
$\kch{\mc{T}}$ by a greedy strategy that iteratively
removes sets from $\kch{\mc{T}}$. Let $\kch{T} \in
\kch{\mc{T}}$ be some set and remove it from $\kch{\mc{T}}$. 
Since we know the coefficients of the convex
combination of $\pp^\star$ with the points in $\kch{T}$,
Lemma~\ref{lem:constr_cara} can be applied to prune $\kch{T}$
to a $\pp^\star$-embracing set of size at most $d+1$ in 
$\Oh{d^3n + n^2}$ time. Then, for each point $\pp \in
\kch{T}$, we remove the at most $k(n-1)-1$ other sets from
$\kch{\mc{T}}$ that contain $\pp$. We continue with the next set
in $\kch{\mc{T}}$ that has not yet been removed until $\kch{\mc{T}} =
\emptyset$. Let $\mc{T}^\star \subseteq
\kch{\mc{T}}$ be the family of sets that we obtain by this process.
Clearly, $\mc{T}^\star$ is a Tverberg partition and 
because $\mc{T}^\star \subseteq \kch{\mc{T}}$, we have $\pp^\star \in
\bigcap_{\kch{T} \in \mc{T}^\star} \convv{\kch{T}}$.
Moreover, for each set
$\kch{T}_i \in \mc{T}^\star$, we remove at most $(d+1)(k(n-1)-1)$
other sets from $\kch{\mc{T}}$. Thus, the size of the Tverberg partition
$\mc{T}^\star$ is at least
\[
  \lt|\mc{T}^\star\rt|
  \geq \lt\lceil \frac{m}{(d+1)(k(n-1)-1)+1}\rt\rceil
  \geq \lt\lceil \frac{n}{(d+1)^2(k(n-1)-1)+d+1}\rt\rceil.
\]

Constructing the \CCP instance takes $\Oh{m n^2}$ time.
Using $\mc{A}$, we need $W(n-1)$ time to compute a $k(n-1)$-colorful choice
$\kch{C}$. Pruning every set of $\kch{\mc{T}}$ with 
Lemma~\ref{lem:constr_cara} to
at most $d+1$ points needs $\Oh{m (d^3 n + n^2)}=\Oh{(d^2 + m) n^2}$ time.
Finally, constructing
$\mc{T}^\star$ out of $\kch{\mc{T}}$ takes $\Oh{n^2}$ time with the naive
algorithm. This results in the claimed running time of
$\Oh{(d^2 + m) n^2 + W(n-1)}$.
\end{prf}

Furthermore, we can use $\mc{A}$ to approximate \ColKirchberger.

\begin{corollary}\label{cor:app:colkirchberger}
Let $\mc{A}$ be as above and let $C_1,\dots, C_n \subset \R^d$ be
$n=(m-1)(d+1)+1$ pairwise disjoint color
classes that are each of size $n$. Furthermore, for $i \in [n]$, let
$\mc{T}_i = \left\{T_{i,1},\dots,T_{i,m} \right\}$ denote a Tverberg
$m$-partition for $C_i$.
Then, given for each Tverberg partition $\mc{T}_i$, $i \in [n]$, a 
point $\pp_i
\in \bigcap_{j=1}^m \convv{T_{i,j}}$, and for all $i \in [n]$ and 
$j \in [m]$, the coefficients of the
convex combination of $\pp_i$ with the points in $T_{i,j}$,
we can compute in $\Oh{n^3 + W(n-1)}$ time a
$k(n-1)$-colorful choice $C\subseteq \bigcup_{i=1}^n C_i$ such that
\[
  \mc{T}_{C} = \left\{ C \cap \left( \bigcup_{i=1}^n T_{i,j} \right) \midd j \in
  [m] \right\}
\]
is a Tverberg $m$-partition for $C$.
\end{corollary}
\begin{prf}
  In the proof of Theorem~\ref{thm:colkirchberger},
  we lift the points $\bigcup_{i=1}^n C_i$ to $\R^{n-1}$ such that the
  set of points $\UP{C}_i$ that corresponds to the color class $C_i$
  still
  embraces the origin, where $i \in [n]$. Moreover, if $\UP{C}' \subseteq
  \bigcup_{i=1}^n \UP{C}_i$ is a $\0$-embracing colorful choice of the 
  lifted points, then there is a corresponding colorful choice $C'$ 
  with respect to $C_1,\dots,C_n$ such that
\[
  \mc{T}_{C'} = \left\{ C' \cap \left( \bigcup_{i=1}^n T_{i,j} \right)
  \midd j \in [m] \right\}
\]
is a Tverberg $m$-partition for $C'$.
Similarly, a $\0$-embracing $k(n-1)$-colorful choice $\UP{C}$ of the
lifted color classes corresponds to a $k(n-1)$-colorful choice $C$ with 
respect to $C_1,\dots,C_n$ such that
\[
  \mc{T}_{C} = \left\{ C \cap \left( \bigcup_{i=1}^n T_{i,j} \right)
  \midd j \in [m] \right\}
\]
is a Tverberg $m$-partition for $C$.

Computing the tensor product $\TwoRowVec{\pp}{1} \otimes \qq$, where 
$\pp \in \R^{d}$ and $\qq \in
\R^{m-1}$, needs $\Oh{dm} = \Oh{n}$ time and hence
lifting the point sets $C_1,\dots,C_n \subset \R^d$ to $\R^{n-1}$ with
Lemma~\ref{lem:sarkaria} needs $\Oh{n^3}$ time in total.
Since we know for each Tverberg partition $\mc{T}_i$, $i \in [n]$, a point
$\pp_i \in \bigcap_{j=1}^m \convv{T_{i,j}}$ together with the
coefficients of the convex combination of $\pp_i$ with the points in 
$T_{i,j}$ for $j \in [m]$, we can compute in $\Oh{n}$ time the 
coefficients of the convex combination of the origin with the points 
in $\UP{C}_i$ as described in the proof of Lemma~\ref{lem:sarkaria}.
Then, $\mc{A}$ can be applied to compute a $\0$-embracing $k(n-1)$-colorful
choice $\UP{C}$ with respect to the lifted point sets in
$W(n-1)$ time. Finally, constructing $C$ and $\mc{T}_C$ out of
$\UP{C}$ needs
$\Oh{n}$ time. Hence, the total time needed is $\Oh{n^3 + W(n-1)}$.
\end{prf}

Now, given $d+1$ color classes $C_1,\dots,C_{d+1} \subset \R^d$ that 
embrace the origin, we can compute with Algorithm~\ref{alg:mainapx} an 
$\lt\lceil \eps d\rt\rceil$-colorful choice that embraces
the origin in polynomial time. Combining this
with Corollary~\ref{cor:app:tverberg},
we obtain an algorithm that computes Tverberg partitions of size 
$\Oh{1}$ in polynomial time, a trivial result.
However, combining Algorithm~\ref{alg:mainapx} with
Corollary~\ref{cor:app:colkirchberger}, we do obtain a nontrivial 
approximation algorithm for \ColKirchberger: given $n =
(m-1)(d+1)+1$ color classes $C_1,\dots,C_n$, each of size $n$, and for each
color class a Tverberg $m$-partition 
$\mc{T}_i = \left\{T_{i,1},\dots,T_{i,m} \right\}$ together with a
point $\pp_i \in \bigcap_{j=1}^m \convv{T_{i,j}}$ and the coefficients 
of the convex combination of $\pp_i$ with the points in $T_{i,j}$, for 
all $j \in [m]$, we can compute
in $n^{\Oh{\eps^{-1} \ln \eps^{-1}}}$ time
an $\lceil \eps n \rceil$-colorful choice $C$ such that
\[
  \mc{T}_{C} = \left\{ C \cap \left( \bigcup_{i=1}^n T_{i,j} \right)
  \midd j \in [m] \right\}
\]
is a Tverberg $m$-partition for $C$, where $\eps > 0$ is arbitrary but 
fixed.

\section{Exact Algorithms}
\label{sec:exact}
In contrast to the previous sections, we now focus on computing an
exact solution for the convex version of \CCP. Let $C_1,\dots,C_{d+1} 
\subset\Q^d$ be $d+1$ sets that each embrace the origin, and assume all 
are of size at most $d+1$. The naive algorithm checks for all 
$\Oh{d^{d+1}}$ possible colorful choices whether they embrace the origin. 
This can be further improved by using the following result by \Barany.

\begin{theorem}[{\cite[Theorem~2.3]{Barany1982}}]\label{thm:chooseone}
  Let $C_1,\dots,C_{d} \subset \R^d$ be $d$ sets that all embrace the 
  origin and let $\cc \in \R^d$ be a point.
  Then, there exist $d$ points $\cc_1 \in C_1,\dots,\cc_d \in C_d$ such 
  that the
  set $\left\{ \cc, \cc_1,\dots,\cc_d \right\}$ embraces the origin. \qed
\end{theorem}

In particular, Theorem~\ref{thm:chooseone} implies that every point 
$\cc \in \bigcup_{i=1}^{d+1} C_i$ participates in some $\0$-embracing 
colorful choice and hence we can fix a point from one color class and 
check only all $\Oh{d^d}$ possibilities of extending it to a colorful 
choice.

We now consider two related settings that allow for further improvement. 
We begin with the simple case in which
each color class consists of only two points 
(Section~\ref{sec:exact:simple}). Then,
basic linear algebra suffices to compute a $\0$-embracing colorful 
choice in polynomial-time. In Section~\ref{sec:exact:many}, we show that
many color classes help. Using an approach similar to the algorithm by 
Miller and Sheehy for approximating Tverberg 
partitions~\cite{MillerSh2010}, we present a quasi-polynomial time 
algorithm that computes a $\0$-embracing colorful choice
when given $\Th{d^2 \log d}$ color classes instead of only $d+1$.

\subsection{A Simple Special Case}\label{sec:exact:simple}

In the following, we assume that $|C_1| = \dots = |C_{d+1}| = 2$ and 
let $\cc_{i,1}, \cc_{i,2}$ denote the two points in $C_i$, for 
$i \in [d+1]$. Clearly, for all $i \in [d+1]$, the point $-\cc_{i,1}$ 
must be contained in the positive span of $\cc_{i,2}$. Furthermore, 
we assume without loss of generality that all points are different 
from the origin, as otherwise computing a $\0$-embracing colorful
choice is trivial. Then, the set $\lt\{ \cc_{i,1} \midd i \in [d+1] \rt\}$ 
is linearly dependent and hence there
exist coefficients $\phi_1,\dots,\phi_{d+1} \in \R$, not all $0$, such that
$\0 = \sum_{i=1}^{d+1} \phi_i \cc_{i,1}$. Now, since $-\cc_{i,1} \in
\poss{\cc_{i,2}}$ for all $i \in [d+1]$, the set $C = \lt\{
\cc_{i,1} \midd i \in [d+1],\, \phi_i \geq 0\rt\} \cup \lt\{
\cc_{i,2} \midd i \in [d+1],\, \phi_i < 0\rt\}$ embraces the origin,
and it is a colorful choice.
Since the computation of the coefficients of the linear dependency can
be carried out in $\Oh{d^3}$ time with Gaussian elimination, 
finding $C$ takes
$\Oh{d^3}$ time in total. The following theorem is now immediate.

\begin{theorem}\label{thm:2pointsperclass}
  Let $C_1, \dots, C_{d+1}\subset \R^d$ be $d+1$ pairs of points that all
  embrace the origin. Then, a $\0$-embracing colorful choice can be 
  computed in $\Oh{d^3}$ time.
\end{theorem}

\subsection{Many Colors}\label{sec:exact:many}

In the following, we assume that we are given $\Th{d^2 \log d}$ instead of
only $d+1$ color classes that all embrace the origin. The algorithm
repeatedly combines $k$-colorful choices to
one $\0$-embracing $\lceil k/2\rceil$-colorful choice until a $\0$-embracing
$1$-colorful choice is obtained. This approach is similar to the
Miller-Sheehy approximation algorithm for Tverberg
partitions~\cite{MillerSh2010}, and it leads to an algorithm with total 
running time $d^{\Oh{\log d }}$.

\begin{lemma}\label{lem:combine}
  Let $C'_1,\dots,C'_{d+1}\subset\R^d$ be $\0$-embracing $k$-colorful 
  choices of size $\Oh{d}$ such that each color appears in a unique 
  $k$-colorful choice. Then, given the coefficients of the convex 
  combination of the origin for each set $C'_i$, $i \in [d+1]$,
  a $\0$-embracing $\lceil k/2\rceil$-colorful choice $C'$ can be 
  computed in $\Oh{d^5}$ time.
\end{lemma}
\begin{prf}
First, we prune each $k$-colorful choice $C'_i$, $i \in[d+1]$, with
Lemma~\ref{lem:findminembr} and then partition it into two
sets $C'_{i,1}, C'_{i,2}$ by distributing the points from each color
equally among both sets. Then, we apply the algorithm from
Lemma~\ref{lem:representatives} to obtain two representative
points $\rr_{i,1}, \rr_{i,2}$ and set $R_i =
\{\rr_{i,1},\rr_{i,2}\}$. Since the sets $R_1,\dots,R_{d+1}$ each 
embrace the origin and consist of only two points, we can compute 
a $1$-colorful choice $R$ with respect to $R_1,\dots,R_{d+1}$ with 
the algorithm from Theorem~\ref{thm:2pointsperclass}.
Now, consider the set $C' = \lt\{ C'_{i,j} \midd \rr_{i,j} \in R \rt\}$. 
Since $R$ is $\0$-embracing, so is $C'$. Moreover, because a color 
$j$ appears only in one of the $k$-colorful choices, say $C'_i$, 
and since each set of the partition
$C'_{i,1}, C'_{i,2}$ contains at most $\lt\lceil k/2 \rt\rceil$ points
with color $j$, the set $C'$ is a $\lt\lceil k/2\rt\rceil$-colorful choice.

Pruning each $k$-colorful choice with Lemma~\ref{lem:findminembr} and 
then computing the two representative points per partition takes 
$\Oh{d^5}$ time in total. This dominates the time needed for the 
computation of $R$ and thus, we can compute $C'$ in $\Oh{d^5}$ time.
\end{prf}

Note that Lemma~\ref{lem:combine} actually implies a second algorithm 
to compute $\lt\lceil (d+1)/2 \rt\rceil$-colorful choices that 
embrace the origin: let $C_1,\dots,C_{d+1} \subset \R^d$ be 
$\0$-embracing color classes and assume the sets have size $d+1$. Set
$C'_i = C_i$ in Lemma~\ref{lem:combine}, for $i \in [d+1]$. 
Then, $C'_i$ is trivially a $(d+1)$-colorful choice and hence 
the set $C'$ is a $\lt\lceil (d+1)/2 \rt\rceil$-colorful choice.

Now, we apply Lemma~\ref{lem:combine} repeatedly until we obtain a 
$1$-colorful choice as follows. Let $C_1,\dots,C_n \subset \Q^d$ be 
$n=\Th{d^2\log d}$ color classes such that
$C_i$ is $\0$-embracing and has size $d + 1$, for $i \in [n]$.
We create an array $A$ of size $m=\Th{\log d}$ that
initially contains all $n$ color classes in $A[0]$. Set $c_0 =
d+1$ and for $i \in [k]$, set $c_i = \lceil c_{i-1}/2\rceil$.
Throughout the algorithm, we maintain the invariant that the $i$th 
cell contains only $\0$-embracing $c_i$-colorful choices and that 
each color appears in at most one set in all of $A$. Since 
$c_0 = d+1$, the invariant holds in the
beginning. We repeatedly improve $k$-colorful choices with
Lemma~\ref{lem:combine} as follows: let $i$ be the maximum index of 
a cell in $A$ that contains at least $d+1$ sets $C'_1,\dots,C'_{d+1}$ 
and remove them from $A[i]$. By our invariant, these sets are 
$\0$-embracing $c_i$-colorful choices. Applying 
Lemma~\ref{lem:combine}, we can combine $C'_1,\dots,C'_{d+1}$ to one
$c_{i+1}$-colorful choice $C'$ that embraces the origin. We prune it with
Lemma~\ref{lem:findminembr} and check whether it is a $1$-colorful 
choice. If so, we have found a solution. Otherwise, we add it to 
$A[i+1]$. Furthermore, we check for colors that appeared in the 
removed sets $C'_1,\dots,C'_{d+1}$ but
not in $C'$ and add the corresponding color classes back to $A[0]$. The
invariant is maintained since these colors only appeared in the 
removed sets. See Algorithm~\ref{alg:many} for a detailed 
description of the algorithm.

\begin{alg}
  \KwIn{color classes $C_1,\ldots,C_{n} \subset \R^d$ and for 
  each set $C_i$, the coefficients of the convex combination of $\0$, 
  where $n=\Th{d^2 \log d}$}
  $A \gets $ Array of size $m=\Th{\log d}$\;
 Prune $C_1, \dots, C_n$ with Lemma~\ref{lem:findminembr}\;
  $A[0] \gets \lt\{C_1,\dots,C_n\rt\}$\;
  \While{no $\0$-embracing colorful choice was found}{%
    $i \gets $ maximum index with $|A[i]| \geq d+1$\;
    Remove $d+1$ sets $C'_1,\dots,C'_{d+1}$ from $A[i]$\;
    $C' \gets $ combine $C'_1,\dots,C'_{d+1}$ with Lemma~\ref{lem:combine}\;
    Prune $C'$ with Lemma~\ref{lem:findminembr}\;
    \If{$C'$ is a colorful choice}{%
    \Return{$C'$}\;
    }
      Add $C'$ to $A[i+1]$\;
    Add all color classes $C_i$ with 
    $C_i \cap \lt(\bigcup_{i=1}^{d+1} C'_i\rt) \neq
    \emptyset$ and $C_i \cap C' = \emptyset$ to $A[0]$\;
  }
\caption[Solving \CCP with many color classes exactly.]{Exact 
algorithm for many color classes.}
\label{alg:many}
\end{alg}

We conclude this section by proving the correctness of 
Algorithm~\ref{alg:many} and analyzing its running time.

\begin{theorem}\label{thm:exact}
  Let $C_1,\ldots,C_{n} \subset \R^d$ be $n=\Th{d^2 \log d}$ sets such that
  $C_i$ embraces the origin and $|C_i| = \Oh{d}$, for $i \in[n]$.
  Then, given the coefficients of the convex combination of the 
  origin for each set $C_i$, $i \in [n]$, Algorithm~\ref{alg:many} 
  computes a $\0$-embracing colorful choice in $d^{\Oh{\log d }}$ time.
\end{theorem}
\begin{prf}
  Set $m = \lceil \log(d+1) \rceil+1$.
  We have already argued that the $i$th cell of the array $A$ contains 
  only $\0$-embracing $c_i$-colorful choices.
  First, we observe that progress is always possible, i.e., that 
  it is always possible to find a cell of $A$ that contains at least 
  $d+1$ sets: the array has $m = \Th{\log d}$ levels and within each 
  set in $A$,
  at most $d$ colors appear. Thus, for $d^2 m + 1 = \Th{d^2 \log d}$ colors,
  the pigeonhole principle guarantees a cell with at least $d+1$
  sets.

   We claim that a combination of $d+1$ sets in $A[m]$ results in 
   a $\0$-embracing colorful choice. Since $c_i \leq \frac{d+1}{2^i} + 2$,
  the sets in $A[m-1]$ are $\0$-embracing $3$-colorful choices, the 
  sets in $A[m]$ are $2$-colorful choices and the combination of 
  $d+1$ sets in $A[m]$ gives a $1$-colorful choice, as claimed.

  Let $T(i)$ denote the time to compute a set at level $i$.
  For this,  we have to compute
  $d+1$ sets in level $i-1$. Since one application of 
  Lemma~\ref{lem:combine} takes
  $\Oh{d^5}$ time, we have $T(i) = (d+1)T(i-1) + \Oh{d^5}$, for $i \geq 1$,
  and $T(0) = \Oh{1}$. This solves to $T(i) = d^{\Oh{i}}$. 
  At the end, each level $i \geq 1$ of $A$ contains at most $d+1$ sets, 
  so the total running time is 
  $\sum_{i = 1}^{m+1} (d+1)T(i) = \sum_{i = 1}^{m+1} d^{\Oh{i}} = 
  d^{\Oh{\log d}}$, as claimed.
\end{prf}

\section{The Complexity of a Related Problem}
\label{sec:ncp}
We can show that a related problem to \CCP that is motivated by 
\Barany's original proof~\cite{Barany1982}, the \emph{local search 
nearest colorful polytope problem} (\NCPl), is \PLS-complete.
Additionally, by adapting the \PLS-completeness proof, we prove that
finding a global optimum for \NCP (\NCPg) is \NP-hard. This answers a 
question by \Barany and Onn~\cite[p.\ 561]{BaranyOn1997}. We note 
that this question has been answered independently by Meunier and
Sarrabezolles~\cite[Theorem~2]{MeunierSa2014}.
In contrast to the previous sections, all algorithms in this section are
analyzed in the \WordRAM with logarithmic costs.
This models the number of steps on a Turing machine, as required 
by the definition of \PLS.

\subsection{The Complexity Class \PLS}

The complexity class \emph{polynomial local search}
(\PLS)~\cite{JohnsonPaYa1988,AartsLe2003,AartsMiKo2007}
captures search problems that can be solved by a
local-improvement algorithm. Each improvement step can be
carried out in polynomial time, but the total number of 
steps to a local optimum may be exponential.
The existence of a local optimum is guaranteed, as the
progress of the algorithm can be measured by a potential function
that strictly decreases with each improvement step.

More formally, a problem in \PLS is a relation $\mc{R}$ between a set of
\emph{problem instances} $\mc{I} \subseteq \{0,1\}^\star$ and a set of
\emph{candidate solutions} $\mc{S} \subseteq \{0,1\}^\star$ with 
the following properties: 

\begin{itemize}
\item The set $\mc{I}$ is polynomial-time verifiable. Furthermore,
there exists an algorithm that, given an instance $I \in \mc{I}$ and a
candidate solution $s \in \mc{S}$, decides in time $\poly(|I|)$ whether
$s$ is \emph{valid} for $I$.
In the following, we denote with $\mc{S}_I \subseteq \mc{S}$
the set of valid candidate solutions for a given instance $I$.
\item There exists a polynomial-time algorithm that on input $I \in
\mc{I}$ returns a valid candidate solution $s_I \in \mc{S}_I$.
We call $s_I$ the \emph{standard solution}.
\item There exists a polynomial-time algorithm that on input $I \in
\mc{I}$ and $s \in \mc{S}_I$ returns a set $N_{I,s} \subseteq
\mc{S}_I$ of valid candidate solutions for $I$. We call $N_{I,s}$ the
\emph{neighborhood} of $s$.
\item There exists a polynomial-time algorithm that on input $I \in
\mc{I}$ and $s \in \mc{S}_I$ returns a number $c_{I,s} \in \Q$. We
call $c_{I,s}$ the \emph{cost} of $s$.
\end{itemize}

We say a candidate solution $s \in \mc{S}$ is a \emph{local optimum} for 
an instance $I \in \mc{I}$ if (i) $s \in \mc{S}_I$; and 
(ii) for all $s' \in N_{I,s}$, we have
$c_{I,s} \leq c_{I,s'}$ (minimization problem) or $c_{I,s} \geq
c_{I,s'}$ (maximization problem).
The relation $\mc{R}$ then consists of all pairs $(I,s)$ such that $s$ is a
local optimum for $I$. This formulation implies a simple algorithm, 
the \emph{standard algorithm}: begin with the standard solution, and 
repeatedly call the neighborhood-algorithm to improve the current solution
until a local optimum is reached. Although each iteration 
takes polynomial time, the total number of iterations
may be exponential, the time needed to cycle through all the
exponentially many candidate solutions. 
There are straightforward examples where this happens.
Moreover, there are \PLS-problems for which it is
\PSPACE-complete to compute the local optimum found by the standard
algorithm~\cite[Lemma~15]{AartsLe2003}.

Each problem instance $I$ of a \PLS-problem can be seen as a
simple search problem on a directed 
graph $G_I=(V,E)$. The nodes of $G_I$ are the
valid candidate solutions for $I$, and there is a directed edge from 
$u \in \mc{S}_I$ to $v \in \mc{S}_I$ if $v \in N_{I,u}$ and 
$c_{I,v} < c_{I,u}$ (minimization problem) or $c_{I,v} > c_{I,u}$
(maximization problem). Then, the
set of local optima for $I$ is precisely the set of \emph{sinks} 
in $G_I$, i.e., the set of nodes with outdegree $0$. Because the
costs induce a topological order on the graph, at least one sink exists.

Since \PLS contains relations and not languages, a different
concept of reduction is necessary to define complete problems. 
We say a $\PLS$
problem $A$ is \emph{\PLS-reducible} (or just \emph{reducible}) to a 
$\PLS$ problem $B$
if there exist two polynomial-time computable functions 
$f_{A \mapsto B}$ and $f_{B \mapsto A}$ with the following properties. 
Let $\mc{I}_{A}$ denote the set of
instances of $A$ and let $\mc{S}_{A}$ denote the set of candidate 
solutions of
$A$. Define $\mc{I}_{B}$ and $\mc{S}_{B}$ similarly.
The function $f_{A \mapsto B}: \mc{I}_{A} \rightarrow \mc{I}_{B}$ maps
problem instances of $A$ to problem instances
of $B$. The function $f_{B \mapsto A}: \mc{I}_{A} \times \mc{S}_{B}
\rightarrow \mc{S}_A$ maps
candidate solutions of $B$ to
candidate solutions of $A$ such that 
if $s_B \in \mc{S}_{B}$ is a candidate solution of $B$ with 
$\lt(f_{A\mapsto B}(I_A), s_B\rt) \in B$,
then $\lt(I_A, f_{B \mapsto A}(I_A, s_B)\rt) \in A$.\footnote{Recall
that $A$ and $B$ are \emph{relations} between problem instances
and candidate solutions.}
The existence of these two functions implies that any
polynomial-time algorithm for $B$ yields a polynomial-time algorithm
for $A$.
We say a problem $A \in \PLS$ is \emph{\PLS-complete} if all problems 
in \PLS can be \PLS-reduced to $A$. The canonical \PLS-complete problem is
\prob{FLIP}~\cite[Theorem~1]{JohnsonPaYa1988}: given a Boolean circuit of
polynomial size with $n$ inputs and $m$ outputs, find an input-assignment 
such that the resulting output interpreted as a number in binary cannot 
be decreased by flipping one bit in the input. The set of \PLS-complete 
problems includes, among various local search variants and heuristics 
for \NP-complete problems,
the Lin-Kernighan heuristic for the traveling salesman
problem~\cite{Papadimitriou1992LKHeur},
computing stable configurations in Hopfield neural
networks~\cite[Corollary~5.12]{SchafferYa1991}, and computing
pure Nash equilibria in congestion
games~\cite[Theorem~3]{FabrikantpaChTa2004PureNash}.

\subsection{The Local Search Nearest Colorful Polytope Problem}
Let $C_1,\dots,C_{m} \subset \Q^d$ be $m$ color classes that do not 
necessarily embrace the origin. For a given set $C' \subset \Q^d$, 
let $\delta(C') = 
\min \left\{ \| \cc \|_1 \mid \cc \in \conv(C')  \right\}$ 
denote the minimum $\ell_1$-norm of a point in $\conv(C')$. In 
\NCPl, we want to find a colorful choice $C$ such that 
$\delta(C)$ cannot be decreased by swapping a single point
with another point of the same color. 
In the language of \PLS, \NCPl is defined as follows.
\begin{definition}[\NCPl]\hfill
  \begin{description}
    \item[Instances.] The set of problem instances $\mc{I}$ 
    consists of all tuples $(C_1,\dots,C_m)$, where $d \in\N$ 
    and for $i \in [m]$, we have $C_i \subset \Q^d$.
    \item[Candidate solutions.] The set of candidate solutions 
    consists of all
    sets $C \subset \Q^d$, where $d \in \N$. For a fixed instance $I =
    (C_1,\dots,C_m) \in \mc{I}$, we define the set of
    valid candidate solutions $\sol_I$ of $I$ to be the set of all colorful
    choices with respect to $C_1,\dots,C_m$.
    \item[Cost function.] Let $s \in \sol_I$ be a colorful
    choice.  Then, the cost $\cost_{I,s}$ of $s$ with respect to $I$ is
    defined as $\delta(s)$. We want to minimize the costs.
    \item[Neighborhood.] Let $I \in \mc{I}$ be an instance and let $s
    \in\mc{S}_I$ be a valid candidate solution. Then, the set of neighbors
    $\nbr_{I,s}$ of $s$ consists of all colorful choices that can be
      obtained by swapping one point with another point of the same color 
      in $s$.
  \end{description}
\end{definition}
We reduce the \PLS-complete problem \MTSATl~\cite{SchafferYa1991} 
to \NCPl. In \MTSATl, we are given a $2$-CNF formula, i.e., a Boolean 
formula in conjunctive normal form in which each clause consists
of at most $2$ literals, and for each clause a weight. The task 
is to find an assignment such that the weighted sum of unsatisfied clauses
cannot be decreased by flipping a single variable. More formally,
\MTSATl is defined as follows.

\newcommand{\clause}[1]{\UP{#1}}
\begin{definition}[\MTSATl]\hfill
  \begin{description}
    \item[Instances.] The set of instances $\mc{I}'$ consists of
    all tuples $I = (n,K_1,\dots,K_m)$ such that $n \in \N$ and 
    for $i \in[n]$, the tuple $K_i$ has the form $(w_i, T_i, F_i)$, 
    where $w_i \in \Z$ and $T_i, F_i \subseteq [n]$ with
    $|T_i \cup F_i| \leq 2$ for all $i \in [n]$. Then, we identify 
    with $K_i$ the clause 
    $\clause{K}_i = \lt(\bigvee_{j \in T_j} x_j\rt) \vee \lt(\bigvee_{j
    \in F_j} \overline{x}_j\rt)$ with weight $w_i$, and we identify 
    with $I$ the $2$-CNF formula 
    $\clause{K}_1 \wedge \dots \wedge \clause{K}_m$ with
    variables $x_1,\dots,x_n$.
    \item[Candidate solutions.] The set of candidate solutions $\mc{S}'$
    contains all tuples $A=(v_1,\dots,v_n)$,
    where $n \in \N$ and $v_i \in \{0,1\}$ for $i \in [n]$. 
    Given an instance $I
    \in \inst'$ in which $n$ variables $x_1,\dots,x_n$ appear, we define
    the set of valid candidate solutions
    $\sol'_{I}$ for $I$ as the set of all $n$-tuples from $\sol'$.
    We interpret the $i$th entry of a tuple $A \in \sol'_{I}$ as
    an assignment to $x_i$ and we denote it with $A(x_i)$.
    \item[Cost function.] Let $I \in \mc{I}'$ be an instance. 
    Then, we define the cost $c'_{I,s}$ of a valid candidate 
    solution $s \in \mc{S}'_I$ as the
    sum of the weights of all unsatisfied clauses. We want to minimize 
    the cost.
    \item[Neighborhood.] Let $I \in \mc{I}'$ be an
    instance and $s \in \mc{S}'_I$ a tuple of size $n$. Then, the
    set of neighbors $\nbr'_{I,s}$ of $s$ consists of all tuples that can 
    be obtained by replacing the $i$th entry $A(x_i)$ with $1-A(x_i)$, 
    where $i \in [n]$.
  \end{description}
\end{definition}

The following theorem is due to Sch\"affer and Yannakakis.
\begin{theorem}[{\cite[Corollary 5.12]{SchafferYa1991}}]
  \MTSATl is \PLS-complete.\qed
\end{theorem}

We continue with the reduction from \MTSATl to \NCPl.

\begin{theorem}
  \label{thm:NCPl}
  \NCPl is \PLS-complete.
\end{theorem}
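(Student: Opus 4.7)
The plan is to use the standard two-step template for PLS-completeness: first show that \NCPl lies in PLS, then reduce from a canonical PLS-complete problem such as \MTSATl (or weighted Max-Cut under the single-flip neighborhood). The match is natural, since the flip neighborhood of 2SAT is combinatorially the same as the single-point exchange neighborhood of an \NCPl instance in which every color class has exactly two points.

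For membership in PLS, I would check the three standard items. A colorful choice has polynomial size; the cost $\dist(\conv(C),\vec{0})$ is the optimum of a polynomial-size convex quadratic program and can therefore be computed in polynomial time; and the exchange neighborhood of any colorful choice has size $O(nd)$, so an improving neighbor can be found or local optimality certified in polynomial time.

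For hardness, I would encode a weighted 2SAT instance on $n$ variables and $m$ clauses as an \NCPl instance of dimension roughly $n+m$. Each variable $x_i$ contributes a color class of exactly two points $p_i^{0},p_i^{1}$ representing its two truth values; this is the only source of swaps, putting the exchange neighborhood of the \NCPl instance in bijection with the flip neighborhood of the \MTSATl instance. The coordinates split into an ``anchor block'' (one coordinate per variable, on which $p_i^{0}$ and $p_i^{1}$ are placed antipodally with a large magnitude $A$) and a ``clause block'' (one coordinate per clause, on which the two points of color $i$ differ only if $x_i$ appears in the corresponding clause). With $A$ large, the minimizer on $\conv(C)$ is forced to use nearly uniform convex weights, so the closest point is essentially the centroid of $C$, and the squared distance decomposes into a constant contribution from the anchor block plus a clause-block contribution that depends on the assignment.

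The main obstacle will be that the clause-block contribution is inherently quadratic in the convex-combination coefficients, whereas the \MTSATl objective is linear in the clause indicator variables. A careless construction produces cross-terms that weight a ``one false literal'' clause differently from a ``two false literals'' clause in the wrong proportion. I would handle this by (i) placing clause coordinates on mutually orthogonal axes so their contributions add, (ii) splitting each clause into per-literal coordinates, or introducing size-one ``dummy'' color classes, to realign the $0/1/2$ pattern of false literals with the $0/1$ Max-2SAT objective, and (iii) choosing the magnitudes so that the final cost becomes an affine function of the weighted number of satisfied clauses. Once this is in place, the bijection between exchanges and flips is cost-preserving, any local optimum of the \NCPl instance pulls back to a local optimum of the \MTSATl instance, and \NCPl is PLS-hard; combined with the membership argument, this yields completeness.
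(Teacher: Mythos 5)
Your high-level template (membership in PLS plus a reduction from a Max-2SAT-style flip problem with one two-point color class per variable and one coordinate per clause) matches the paper's strategy, and the membership argument is fine. However, the hardness side as you sketch it has genuine gaps that the paper's construction avoids.

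First, the problem \NCPl is defined with the $\ell_1$ cost $\cost(S)=\min\{\|q\|_1: q\in\conv(S)\}$, not the squared Euclidean distance. Your plan of computing the cost by a quadratic program, decomposing a \emph{squared} distance into blocks, and arguing that the minimizer is ``essentially the centroid'' is tailored to $\ell_2^2$. With $\ell_2^2$ even a correct centroid-based construction would make the cost equal to $\sum_{j\,\text{unsat}} w_j^2$ rather than $\sum_{j\,\text{unsat}} w_j$; since flipping a variable changes a \emph{set} of clauses, the sign of the cost change in the two objectives can differ, so local optima need not correspond unless you pre-transform the weights. This is a real correctness issue, not a technicality.

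Second, and more fundamentally, the ``anchor block with large magnitude $A$ forces nearly uniform weights'' argument is heuristic. For a PLS reduction you need an \emph{exact} statement that every local optimum of the \NCPl instance pulls back to a local optimum of \MTSATl. If the distance is only approximately an affine function of the satisfied weight, a single flip that is improving in \MTSATl by a small amount $\delta$ could fail to be improving in \NCPl because of the cross-terms you yourself flag. You list three possible fixes but do not verify any of them; as written, the proposal does not close this gap.

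The paper sidesteps all of this by exploiting the $\ell_1$ norm directly, with no anchor coordinates and no perturbation argument. Variable $x_i$ gets two points whose $j$th coordinate is $-n w_j$ if the corresponding literal satisfies clause $j$ and $w_j$ otherwise. The key observation is one-sided and exact: if clause $j$ is unsatisfied by the encoded assignment, then \emph{every} point of the colorful choice has $j$th coordinate equal to $w_j$, so every point of the convex hull does too, giving the lower bound $\cost(S)\ge\sum_{j\,\text{unsat}} w_j$ with no control over the optimal $\lambda$ needed. For the matching upper bound the paper \emph{constructs} a specific convex combination (uniform weights $1/(n+1)$ on the variable points plus a suitable mix of singleton helper color classes $H_1,\dots,H_{d+1}$) that makes every satisfied-clause coordinate exactly $0$ and every unsatisfied-clause coordinate exactly $w_j$. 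The helper classes are singletons, so they do not enlarge the exchange neighborhood. No asymptotic ``large $A$'' argument is needed, and the cost is exactly, not approximately, the unsatisfied weight. If you want to pursue your approach, the cleanest repair is to drop the anchor block and the $\ell_2$ cost, and instead prove the lower bound coordinatewise via a sign argument and the upper bound by an explicit witness point.
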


\begin{prf}
  Let $I'=(n,K_1,\dots,K_d) \in \mc{I}'$ be an 
  instance of \MTSATl. We construct an instance 
  $I \in \mc{I}$ of \NCPl in which each colorful choice $C$ 
  encodes an assignment $A_C$ such that the cost
  $c_{I, C}$ of $C$ equals the cost $c'_{I', A_C}$.

  For each variable $x_i$, we introduce a color class $X_i = \{\xx_i,
  \overline{\xx}_i\} $ consisting
  of two points in $\Q^d$ that encode whether $x_i$ is set to $1$ or $0$.
  We assign the $j$th dimension to the $j$th clause and set
  \begin{equation*}
    \lt(\xx_i\rt)_j =
    \begin{cases}
      -n w_j, & \text{if $x_i = 1$ satisfies $\clause{K}_j$, and}\\
    w_j, & \text{otherwise},
    \end{cases}
  \end{equation*}
  where $j \in [d]$. Similarly, we set
  \begin{equation*}
\lt(\overline{\xx_i}\rt)_j =
    \begin{cases}
      -n w_j, & \text{if $x_i = 0$ satisfies $\clause{K}_j$, and}\\
    w_j, & \text{otherwise},
    \end{cases}
  \end{equation*}
  where $j \in[d]$.  Then, a colorful choice $C$ of $X_1, \dots, X_m$
  corresponds to the assignment $A_C \in\sol'_{I'}$  that
  sets $x_i$ to $1$ if $\xx_i\in C$ and otherwise to $0$. Conversely,
  an assignment $A \in\sol'_{I'}$ can be interpreted directly as a 
  colorful choice $C$ of $X_1, \dots, X_m$

In the following, we construct an instance of \NCPl such that the
convex hull of a colorful choice $C$ contains the origin if projected 
onto the dimensions corresponding to clauses that are satisfied by 
$A_C$ (and hence do not contribute to the cost of $C$). Moreover,
if projected onto the subspace corresponding to the unsatisfied clauses,
$\delta(C)$ equals the total weight of unsatisfied clauses which then 
defines completely the cost of $C$.

We introduce additional helper color classes to decrease the distance to the
origin in dimensions that correspond to satisfied clauses.
In particular, we have for each clause $\clause{K}_i$, $i \in [d]$, a color
class $H_i = \{\hh_i\}$ consisting of a single point, where
  \begin{equation*}
    (\hh_i)_j = \begin{cases}
      (d+1) \left((n+2) - \frac{d}{d+1}\right) w_i, & 
      \text{if $j = i$, and} \\
      w_j, & \text{ otherwise,}
      \end{cases}
  \end{equation*}
  where $j \in [d]$. The last helper color class $H_{d+1} =
  \{\hh_{d+1}\}$ again contains a single point, but now all
  coordinates are
  set to the clause weights, i.e.,
\[
(\hh_{d+1})_j = w_j, \text{~for $j \in [d]$.}
\]
See Figure~\ref{fig:reduction} for an example.
\begin{figure*}[htbp]
  \begin{center}
    \includegraphics[width=\textwidth]{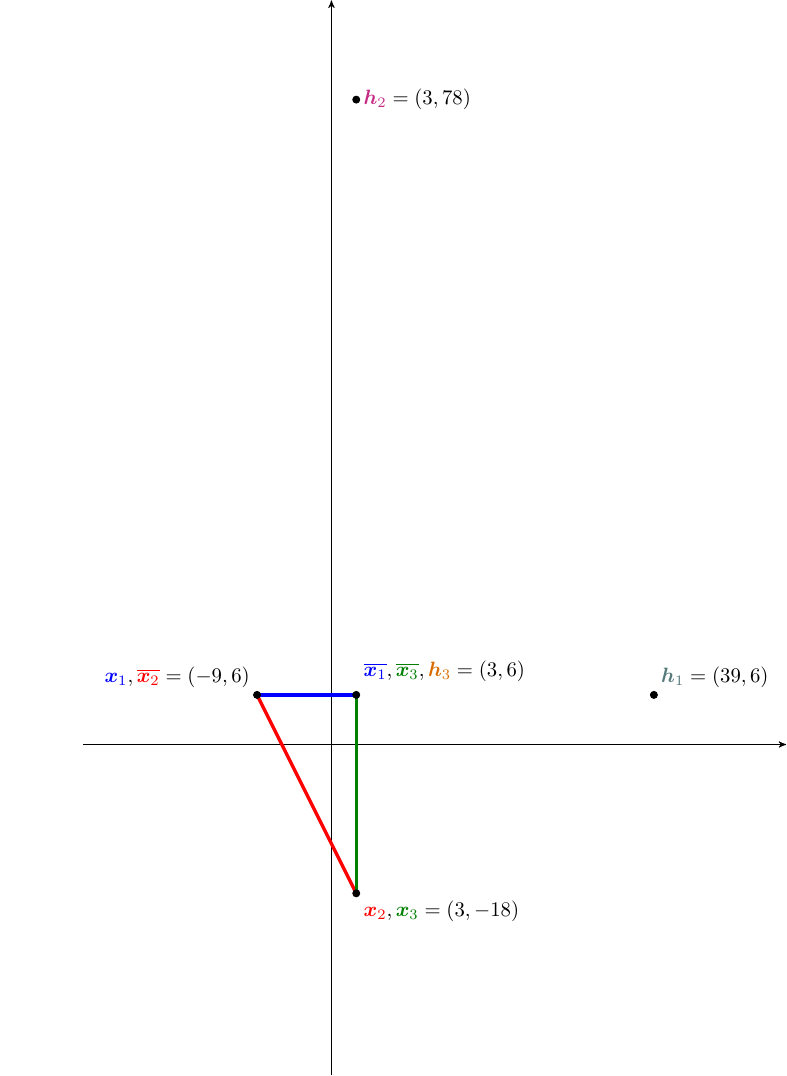}
  \end{center}
  \caption[Construction of the \NCPl instance that corresponds to
  a given \MTSATl instance.]{Construction of the point sets corresponding 
  to the \MTSATl instance 
  $\lt(x_1 \vee \overline{x_2}\rt) \wedge \lt(x_2 \vee
  x_3\rt)$ with weights $3$ and $6$, respectively.}
  \label{fig:reduction}
\end{figure*}

Let now $I = (X_1,\ldots,X_n, H_1, \ldots, H_{d+1}) \in \mc{I}$ denote the
constructed \NCPl instance. We continue with showing that the cost of a
colorful choice equals the cost of the corresponding assignment by 
proving the
following two inequalities.
\begin{enumerate}[label=(\roman{enumi})]
\item\label{pls:costs:lower}
for every colorful choice $C \in \mc{S}_I$, the cost are lower bounded by
the cost of the corresponding assignment:
\[
\cost_{I,C} \geq \cost'_{I',A_C}.
\]
\item\label{pls:costs:upper}
for every colorful choice $C \in \mc{S}_I$, the cost are upper bounded by
the cost of the corresponding assignment:
\[
\cost_{I,C} \leq \cost'_{I',A_C}.
\]
\end{enumerate}

Note that~\ref{pls:costs:lower} and~\ref{pls:costs:upper} directly 
imply that \NCPl is \PLS-complete. To see this, consider a local 
optimum $s^\star \in \mc{S}_I$ of the \NCPl instance $I$. By 
definition, the costs of all other colorful choices that can be 
obtained from $s^\star$ by swapping one point with another
of the same color are greater or equal to $\cost_{I,s^\star}$. Then, 
the total weight of unsatisfied clauses by the corresponding 
assignment $A_{s^\star} \in \mc{S}'_{I'}$ cannot be decreased by 
flipping a variable. Thus, $A_{s^\star}$ is a local minimum 
of the \MTSATl instance $I'$.

\ref{pls:costs:lower} Let $C \in \mc{S}_I$ be a colorful choice 
and assume some clause $\clause{K}_j$ is not satisfied by the 
corresponding assignment $A_C \in \mc{S}'_{I'}$. By
construction, the $j$th coordinate
of each point $\pp$ in $C$ is at least $w_j$. Thus, the $j$th coordinate
of every convex combination of the points in $C$ is at least $w_j$ and
hence $\cost_{I,C} \geq \cost_{I',A_C}$.

\ref{pls:costs:upper}
Let $C \in \mc{S}_I$ be a colorful choice. In the following, we construct a
convex combination of the points in $C$ that results in a point $\pp$ whose
$\ell_1$-norm is exactly the total weight of unsatisfied clauses in the
corresponding assignment $A_C \in \mc{S}'_{I'}$ and thus $\cost_{I,C} \leq
\cost_{I',A_C}$. For $k=0,1,2$, let $S_k$ denote the set of clauses that are
satisfied by exactly $k$ literals with respect to the assignment $A_C$. As a
first step towards constructing $\pp$, we show the existence of an 
intermediate
point in the convex hull of the helper classes.
\begin{lemma}\label{lem:h}
There is a point $\hh\in\conv(H_1,\ldots,H_{d+1})$ whose $j$th coordinate is
$(n+2) w_j$, if $j\in S_2$, and $w_j$, otherwise.
\end{lemma}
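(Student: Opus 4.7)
The plan is to exhibit the point $h$ explicitly as a convex combination of $h_1,\ldots,h_{d+1}$ with cleverly chosen coefficients and then verify the coordinates by direct computation. Since the only way a coordinate $j\in\{1,\ldots,d\}$ can exceed $w_j$ is through the contribution of the ``spike'' point $h_j$ (all other $h_k$ with $k\leq d$, $k\neq j$, and also $h_{d+1}$, have their $j$th coordinate equal to $w_j$), the coefficient of $h_j$ must vanish whenever $j\notin A_2$, and must be the same positive value $\lambda^*$ for every $j\in A_2$. The remaining mass will be absorbed by $h_{d+1}$.

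Concretely, I would set
\[
  \lambda_j = \lambda^* \text{ for } j\in A_2,\qquad
  \lambda_j = 0 \text{ for } j\in\{1,\ldots,d\}\setminus A_2,\qquad
  \lambda_{d+1}=1-|A_2|\,\lambda^*,
\]
and pick $\lambda^*$ so that the $j$th coordinate comes out to $(n+2)w_j$ for $j\in A_2$. Writing $\alpha=(d+1)(n+2)-d$ for the ``spike'' factor in $(h_j)_j=\alpha w_j$, the $j$th coordinate of the convex combination $h=\sum_k\lambda_k h_k$ is
\[
  h_j = \lambda_j\alpha w_j + (1-\lambda_j)w_j = w_j\bigl(1+\lambda_j(\alpha-1)\bigr).
\]
The key algebraic observation (which is the only nontrivial computation) is $\alpha-1 = (d+1)(n+2)-(d+1) = (d+1)(n+1)$. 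Hence the equation $1+\lambda^*(\alpha-1)=n+2$ yields $\lambda^*=\frac{n+1}{(d+1)(n+1)}=\frac{1}{d+1}$.

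With this choice, for $j\in A_2$ we get $h_j=(n+2)w_j$, and for $j\in\{1,\ldots,d\}\setminus A_2$ we have $\lambda_j=0$, so $h_j=w_j$ as required. It remains to check that the coefficients form a valid convex combination: all $\lambda_j\geq 0$ is immediate, and
\[
  \lambda_{d+1} = 1 - \frac{|A_2|}{d+1} \geq 1 - \frac{d}{d+1} = \frac{1}{d+1} > 0,
\]
since $|A_2|\leq d$. No step presents a real obstacle; the only thing to be careful about is identifying the factor $\alpha-1=(d+1)(n+1)$, which is precisely why the constant in the definition of $h_j$ was rigged in that unusual form.
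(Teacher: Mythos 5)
Your proposal is correct and takes essentially the same approach as the paper: the paper also defines $h = \sum_{a\in A_2} \frac{1}{d+1} h_a + \left(1-\frac{|A_2|}{d+1}\right) h_{d+1}$ and verifies the two coordinate cases by direct computation. The only difference is presentational: you derive the coefficient $1/(d+1)$ from a single-equation constraint and streamline the verification by noting that only the spike coefficient $\lambda_j$ matters for coordinate $j$, whereas the paper states the combination outright and sums the terms explicitly.
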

\begin{prf}
Take $\hh = \sum\limits_{i\in S_2} \frac{1}{d+1} \hh_i +
\left(1-\frac{|S_2|}{d+1}\right) \hh_{d+1}$. Then, for $j\in S_0\cup S_1$, 
we
have
\[
  (\hh)_j
  = \sum\limits_{i\in S_2} \frac{1}{d+1} (\hh_i)_j
      + \left(1-\frac{|S_2|}{d+1}\right) (\hh_{d+1})_j
    \stackrel{j\notin S_2}{=} \sum\limits_{i\in S_2} \frac{1}{d+1} w_j
      + \left(1-\frac{|S_2|}{d+1}\right) w_j 
  = w_j.
\]

And for $j \in S_2$, we have
\begin{align*}
    (\hh)_j & = \sum\limits_{i\in S_2} \frac{1}{d+1} (\hh_i)_j +
      \left(1-\frac{|S_2|}{d+1}\right) (\hh_{d+1})_j
    \\
      & = \frac{1}{d+1} (\hh_j)_j + \sum\limits_{i\in S_2\setminus\{j\}}
        \frac{1}{d+1} (\hh_i)_j +
        \left(1-\frac{|S_2|}{d+1}\right) (\hh_{d+1})_j
  \\
    & = \left((n+2) - \frac{d}{d+1}\right) w_j + \frac{d}{d+1}
      w_j = (n+2) w_j,
\end{align*}
as desired.
\end{prf}

Let now $\ve{l}_i$ be the point from $X_i$ in the colorful choice 
$C$ and consider the point
\[
\pp = \frac{1}{n+1} \lt(\sum_{i=1}^n  \ve{l}_i + \hh\rt),
\]
where
$\hh$ is the point from Lemma~\ref{lem:h}.
We show that $(\pp)_j = w_j$ if $j\in S_0$, and otherwise $(\pp)_j = 0$.
Let $j$ be a clause index from $S_0$. Since $A_C$ does not 
satisfy $\clause{K}_j$, the $j$th coordinate of the points 
$\ve{l}_1,\ldots,\ve{l}_n$ is $w_j$.
Also, $(\hh)_j = w_j$ by Lemma~\ref{lem:h}. Thus, 
$(\pp)_j = w_j$. Consider now
some clause index $j\in S_1$ and let $b \in [2]$ be the index of the point
$\ve{l}_b$ that corresponds to the single literal that satisfies 
$\clause{K}_j$.  Then, we have
\begin{multline*}
    (\pp)_j  = \sum_{i=1}^n \frac{1}{n+1} (\ve{l}_i)_j + \frac{1}{n+1}
      (\hh)_j\\
    = \frac{1}{n+1} (\ve{l}_b)_j + \sum_{i=1, i\neq b}^n
    \frac{1}{n+1} (\ve{l}_i)_j + \frac{1}{n+1} (\hh)_j
      =  \frac{-n}{n+1} w_j + \frac{n}{n+1} w_j = 0.
\end{multline*}
Finally, consider some clause index $j\in S_2$ and let $b_1, b_2$ 
be the indices of the two literals that satisfy $\clause{K}_j$. 
Then, we obtain
\begin{align*}
    (\pp)_j & = \sum\limits_{i=1}^n \frac{1}{n+1} (\ve{l}_i)_j + 
    \frac{1}{n+1} (\hh)_j
  \\
    & = \frac{1}{n+1} (\ve{l}_{b_1})_j +\frac{1}{n+1} (\ve{l}_{b_2})_j +
      \sum\limits_{i=1, i\notin \{b_1, b_2\}}^n \frac{1}{n+1} (\ve{l}_i)_j +
      \frac{1}{n+1} (\hh)_j
  \\
    & =  \frac{-2n}{n+1} w_j + \frac{n-2}{n+1} w_j + 
    \frac{n+2}{n+1} w_j = 0,
\end{align*}
and thus $\| \pp \|_1 = c_{I',A_C}$, as claimed.
\end{prf}

\subsection{The Global Search Nearest Colorful Polytope Problem}

In the global search variant \NCPg of the nearest colorful polytope
problem, we are looking for a colorful
choice $C$ such that $\delta(C)$ is minimum over all possible colorful
choices.
The proof of Theorem~\ref{thm:NCPl} can be adapted easily
to reduce \ThreeSat to \NCPg.

\begin{restatable}{theorem}{NCPgthm}\label{thm:NCPg}
  \NCPg is \NP-hard.
\end{restatable}
\begin{prf}
 Given a set of clauses $K_1,\dots,K_d$, we set the
  weight of each clause to $1$ and construct the same
  point sets as in the \PLS-reduction. Additionally, we introduce for each
  clause $K_j$ a new helper color class $H'_j = \{\hh'_j\}$, where
  \begin{align*}
    (\hh'_i)_j =
      \begin{cases}
        (d+1) \left((2n+3) - \frac{d}{d+1}\right), & \text{if $i=j$, and} \\
        1, & \text{otherwise.}
      \end{cases}
  \end{align*}
  Let now $C$ be a colorful choice and let $A_C$ be the
  corresponding assignment. As in the \PLS-reduction, for 
  $k = 0, \dots, 3$, let $S_k$ contain all clauses that are 
  satisfied by exactly $k$ literals in the assignment $A_C$. Then, 
  the following point $\hh$ is contained in
  the convex hull of the helper points:
  \begin{align*}
      \hh = \sum\limits_{i\in S_2} \frac{\hh_i}{d+1} +
          \sum\limits_{j\in S_3} \frac{\hh'_{j}}{d+1} +
      \left(1-\frac{|S_2|+|S_3|}{d+1}\right) \hh_{d+1}.
  \end{align*}
  As above, we see that $(\hh)_j = 1$, if $j \in S_0 \cup S_1$,
  $(\hh)_j = n+2$, if $j \in S_2$, and $(\hh)_j = 2n+3$, if $j \in S_3$.
  Indeed, for $j \in S_0 \cup S_1$, we have:
\begin{multline*}
  (\hh)_j
  = \sum\limits_{i\in S_2} \frac{1}{d+1} (\hh_i)_j
  + \sum\limits_{i\in S_3} \frac{1}{d+1} (\hh'_i)_j
      + \left(1-\frac{|S_2|+|S_3|}{d+1}\right) (\hh_{d+1})_j
  \\
    \stackrel{j\notin S_2 \cup S_3}{=} \sum\limits_{i\in S_2 \cup S_3} 
    \frac{1}{d+1} 
      + \left(1-\frac{|S_2| + |S_3|}{d+1}\right)  
  = 1.
\end{multline*}
 For $j \in S_2$, we have
\begin{align*}
    (\hh)_j & = \sum\limits_{i\in S_2} \frac{1}{d+1} (\hh_i)_j 
  + \sum\limits_{i\in S_3} \frac{1}{d+1} (\hh'_i)_j
 +     \left(1-\frac{|S_2|+|S_3|}{d+1}\right) (\hh_{d+1})_j
  \\
    & = (\hh_j)_j  + 
    \sum\limits_{i\in S_2 \setminus j} \frac{1}{d+1}  
  + \sum\limits_{i\in S_3} \frac{1}{d+1} 
 +     \left(1-\frac{|S_2|+|S_3|}{d+1}\right) 
  \\
    & = \left((n+2) - \frac{d}{d+1}\right)  + \frac{d}{d+1}
      = n+2,
\end{align*}
 and for $j \in S_3$, 
\begin{align*}
    (\hh)_j & = \sum\limits_{i\in S_2} \frac{1}{d+1} (\hh_i)_j 
  + \sum\limits_{i\in S_3} \frac{1}{d+1} (\hh'_i)_j
 +     \left(1-\frac{|S_2|+|S_3|}{d+1}\right) (\hh_{d+1})_j
  \\
    & = (\hh'_j)_j  + 
    \sum\limits_{i\in S_2 } \frac{1}{d+1}  
  + \sum\limits_{i\in S_3 \setminus j} \frac{1}{d+1} 
 +     \left(1-\frac{|S_2|+|S_3|}{d+1}\right) 
  \\
    & = \left((2n+3) - \frac{d}{d+1}\right)  + \frac{d}{d+1}
      = 2n+3.
\end{align*}

  As before, the convex combination
    $\pp = \sum_{i=1}^n \frac{1}{n+1} \ve{l}_i + \frac{1}{n+1} \hh$
  results in a point in the convex hull of $C$ whose distance to the 
  origin is the number of unsatisfied clauses, where $\ve{l}_i$ denotes 
  the point from $X_i$ in $C$. Indeed, if $\clause{K}_j$ is not satisfied,
  then all $j$-components in the sum are $1$, and $(\pp)_j = 1$.
   If $j \in S_1$, then, as discussed above
   \[
   (\pp)_j = \frac{-n}{n+1} + \frac{n-1}{n+1} + \frac{1}{n+1} = 0.
   \]
   If $j \in S_2$, then
    \[
   (\pp)_j = \frac{-2n}{n+1} +\frac{n-2}{n+1}  + \frac{n+2}{n+1}= 0,
   \]
   and if $j \in S_3$, then
     \[
   (\pp)_j = \frac{-3n}{n+1} +\frac{n-3}{n+1}  + \frac{2n+3}{n+1}= 0.
   \]

  Together with \ref{pls:costs:lower} from the proof of
  Theorem~\ref{thm:NCPl}, \ThreeSat can be decided by knowing a 
  global optimum $C^\star$ to the \NCP problem: if 
  $\delta(C^\star) = 0$, $A_{C^\star}$ is a
  satisfying assignment. If not, there exists no satisfying 
  assignment at all.
\end{prf}

As mentioned above, we can adapt the proof of
Theorem~\ref{thm:NCPg} to answer a question by \Barany and
Onn~\cite{BaranyOn1997}. 

\begin{corollary}
  Let $C_1,\dots,C_m\subset\Q^d$ be an input for \NCPg.
  Then, \NCPg remains \NP-hard even if $m=d+1$.
\end{corollary}
\begin{prf}
  Let $F$ be a \ThreeSat formula with $d$ clauses and $n$ variables.
  As in the proof of Theorem~\ref{thm:NCPg}, we construct 
  $n+2d+1 =: d'+1$ point
  sets in $\Q^d$ such that there is a colorful choice that embraces 
  the origin
  if and only if $F$ is satisfiable. Since $d' > d$, we 
  can lift the point sets to $\Q^{d'}$ by appending $0$-coordinates. 
  Then, we have $d'+1$ point sets such that there is a colorful choice 
  that embraces the origin if and only if
  $F$ is satisfiable.
\end{prf}

\section{Conclusion}
We conclude with several interesting open problems.
\begin{itemize}
  \item The algorithm in Theorem~\ref{thm:bapx} computes in polynomial 
  time a $\0$-embracing $\lceil \eps d \rceil$-colorful choice for 
  any fixed $\eps > 0$. A more careful analysis shows that the 
  algorithm needs only $c_\eps$ color classes, where $c_\eps>0$ is 
  a constant depending only on $\eps$. Hence, the algorithm does not
    use its complete input. Can this be used to further improve the
    approximation guarantee?
  \item Is it possible to compute a $\0$-embracing $o(d)$-colorful choice 
  in polynomial time and in particular, is it possible to compute a 
  $\0$-embracing $O(1)$-colorful choice in polynomial time?
  \item On the other hand, can it be shown that computing a 
  $\0$-embracing $O(1)$-colorful
    choice is as hard as computing a $\0$-embracing $1$-colorful choice?
  \item In Section~\ref{sec:exact}, we show that many color classes help
    to find a $\0$-embracing $1$-colorful choice. Can a $\0$-embracing
    $1$-colorful choice be computed in polynomial time if we have
    $\text{poly}(d)$ color classes?
\end{itemize}

\paragraph{Acknowledgements.}
We would like to thank Fr\'ed\'eric Meunier and Pauline Sarrabezolles
for interesting discussions on the colorful \Caratheodory problem
and for hosting us during multiple research stays at the
\'Ecole Nationale des Ponts et Chauss\'ees.
Furthermore, we would like to thank the anonymous reviewers for their 
detailed reading of our paper and for their
helpful and encouraging comments on previous versions.

\bibliographystyle{abbrv}      
\bibliography{refs}   

\begin{thebibliography}{10}

\bibitem{AartsLe2003}
E.~Aarts and J.~K. Lenstra.
\newblock {\em Local search in combinatorial optimization}.
\newblock Princeton University Press, 2003.

\bibitem{ArochaBaBrFaMo2009}
J.~L. Arocha, I.~B{\'a}r{\'a}ny, J.~Bracho, R.~Fabila, and L.~Montejano.
\newblock Very colorful theorems.
\newblock {\em Discrete Comput. Geom.}, 42(2):142--154, 2009.

\bibitem{Barany1982}
I.~B{\'a}r{\'a}ny.
\newblock A generalization of {C}arath\'eodory's theorem.
\newblock {\em Discrete Math.}, 40(2-3):141--152, 1982.

\bibitem{BaranyOn1997}
I.~B{\'a}r{\'a}ny and S.~Onn.
\newblock Colourful linear programming and its relatives.
\newblock {\em Math. Oper. Res.}, 22(3):550--567, 1997.

\bibitem{barman2015}
S.~Barman.
\newblock Approximating {N}ash equilibria and dense bipartite subgraphs via an
  approximate version of {C}arath{\'{e}}odory's theorem.
\newblock In {\em Proc. 47th Annu. ACM Sympos. Theory Comput. (STOC)}, pages
  361--369, 2015.

\bibitem{BlumFlPrRiTa1973}
M.~Blum, R.~W. Floyd, V.~Pratt, R.~L. Rivest, and R.~E. Tarjan.
\newblock Time bounds for selection.
\newblock {\em J. Comput. System Sci.}, 7(4):448--461, 1973.

\bibitem{Chan2004}
T.~M. Chan.
\newblock An optimal randomized algorithm for maximum {T}ukey depth.
\newblock In {\em Proc. 15th Annu. ACM-SIAM Sympos. Discrete Algorithms
  (SODA)}, pages 430--436, 2004.

\bibitem{FabrikantpaChTa2004PureNash}
A.~Fabrikant, C.~H. Papadimitriou, and K.~Talwar.
\newblock The complexity of pure {N}ash equilibria.
\newblock In {\em Proc. 36th Annu. ACM Sympos. Theory Comput. (STOC)}, pages
  604--612, 2004.

\bibitem{JadhavMu1994}
S.~Jadhav and A.~Mukhopadhyay.
\newblock Computing a centerpoint of a finite planar set of points in linear
  time.
\newblock {\em Discrete Comput. Geom.}, 12(3):291--312, 1994.

\bibitem{JohnsonPaYa1988}
D.~S. Johnson, C.~H. Papadimitriou, and M.~Yannakakis.
\newblock How easy is local search?
\newblock {\em J. Comput. System Sci.}, 37(1):79--100, 1988.

\bibitem{KapoorVa1986}
S.~Kapoor and P.~M. Vaidya.
\newblock Fast algorithms for convex quadratic programming and multicommodity
  flows.
\newblock In {\em Proc. 18th Annu. ACM Sympos. Theory Comput. (STOC)}, pages
  147--159, 1986.

\bibitem{Kirchberger1903}
P.~Kirchberger.
\newblock \"{U}ber {T}chebychefsche {A}nn\"aherungsmethoden.
\newblock {\em Math. Ann.}, 57(4):509--540, 1903.

\bibitem{kozlov1980polynomial}
M.~K. Kozlov, S.~P. Tarasov, and L.~G. Khachiyan.
\newblock The polynomial solvability of convex quadratic programming.
\newblock {\em USSR Comput. Math. and Math. Phys.}, 20(5):223--228, 1980.

\bibitem{Matouvsek2002}
J.~Matou{\v{s}}ek.
\newblock {\em Lectures on discrete geometry}.
\newblock Springer, 2002.

\bibitem{MegiddoPa1991}
N.~Megiddo and C.~H. Papadimitriou.
\newblock On total functions, existence theorems and computational complexity.
\newblock {\em Theoret. Comput. Sci.}, 81(2):317--324, 1991.

\bibitem{MeunierDe2013}
F.~Meunier and A.~Deza.
\newblock A further generalization of the colourful {C}arath\'eodory theorem.
\newblock In {\em Discrete geometry and optimization}, volume~69 of {\em Fields
  Inst. Commun.}, pages 179--190. Springer Verlag, 2013.

\bibitem{MeMuSaSt2017}
F.~Meunier, W.~Mulzer, P.~Sarrabezolles, and Y.~Stein.
\newblock The rainbow at the end of the line --- a {PPAD} formulation of the
  colorful {C}arath{\'e}odory theorem with applications.
\newblock In {\em Proc. 28th Annu. ACM-SIAM Sympos. Discrete Algorithms
  (SODA)}, pages 1342--1351, 2017.

\bibitem{MeunierSa2014}
F.~Meunier and P.~Sarrabezolles.
\newblock Colorful linear programming, {N}ash equilibrium, and pivots.
\newblock arXiv:1409.3436, 2014.

\bibitem{AartsMiKo2007}
W.~Michiels, E.~Aarts, and J.~Korst.
\newblock {\em Theoretical aspects of local search}.
\newblock Springer, 2007.

\bibitem{MillerSh2010}
G.~L. Miller and D.~R. Sheehy.
\newblock Approximate centerpoints with proofs.
\newblock {\em Comput. Geom.}, 43(8):647--654, 2010.

\bibitem{MulzerWe2013}
W.~Mulzer and D.~Werner.
\newblock Approximating {T}verberg points in linear time for any fixed
  dimension.
\newblock {\em Discrete Comput. Geom.}, 50(2):520--535, 2013.

\bibitem{Papadimitriou1992LKHeur}
C.~H. Papadimitriou.
\newblock The complexity of the {L}in-{K}ernighan heuristic for the traveling
  salesman problem.
\newblock {\em {SIAM} J. Comput.}, 21(3):450--465, 1992.

\bibitem{PreparataSh1985}
F.~P. Preparata and M.~I. Shamos.
\newblock {\em Computational geometry}.
\newblock Springer, 1985.

\bibitem{Rado1946}
R.~Rado.
\newblock A theorem on general measure.
\newblock {\em J. London Math. Soc.}, 21:291--300 (1947), 1946.

\bibitem{Roudneff2001}
J.-P. Roudneff.
\newblock Partitions of points into simplices with {$k$}-dimensional
  intersection. {I}. {T}he conic {T}verberg's theorem.
\newblock {\em European J. Combin.}, 22(5):733--743, 2001.

\bibitem{Sarkaria1992}
K.~S. Sarkaria.
\newblock {T}verberg's theorem via number fields.
\newblock {\em Israel J. Math.}, 79(2-3):317--320, 1992.

\bibitem{SchafferYa1991}
A.~A. Sch{\"a}ffer and M.~Yannakakis.
\newblock Simple local search problems that are hard to solve.
\newblock {\em {SIAM} J. Comput.}, 20(1):56--87, 1991.

\bibitem{Teng1991}
S.-H. Teng.
\newblock {\em Points, Spheres, and Separators: A Unified Geometric Approach to
  Graph Partitioning}.
\newblock PhD thesis, Carnegie Mellon University, 1991.

\bibitem{Tverberg1966}
H.~Tverberg.
\newblock A generalization of {R}adon’s theorem.
\newblock {\em J. London Math. Soc}, 41(1):123--128, 1966.

\bibitem{Tverberg1981}
H.~Tverberg.
\newblock A generalization of {R}adon's theorem {II}.
\newblock {\em Bull. Australian Math. Soc.}, 24(3):321--325, 1981.

\bibitem{TverbergVr1993}
H.~Tverberg and S.~Vre{\'c}ica.
\newblock On generalizations of {R}adon's theorem and the ham sandwich theorem.
\newblock {\em Eur. J. Combin.}, 14(3):259--264, 1993.

\end{thebibliography}
\end{document}